\declaretheorem[numberwithin=section,refname={theorem,theorems},Refname={Theorem,Theorems}]{theorem}
\declaretheorem[sibling=theorem,style=definition]{definition}
\declaretheorem[sibling=theorem,name=Lemma]{lemma}
\declaretheorem[sibling=theorem,name=Proposition]{proposition}
\declaretheorem[sibling=theorem,name=Corollary]{corollary}
\declaretheorem[style=remark,name=Example]{Example}
\declaretheorem[style=remark,name=Remark]{Remark}
\newcommand\blfootnote[1]{%
  \begingroup
  \renewcommand\thefootnote{}\footnote{#1}%
  \addtocounter{footnote}{-1}%
  \endgroup
}
\DeclareMathOperator{\rep}{rep}
\DeclareMathOperator{\val}{val}
\DeclareMathOperator{\Fact}{Fact}
\providecommand{\abs}[1]{\lvert#1\rvert}
\providecommand{\floor}[1]{\lfloor#1\rfloor}
\providecommand{\ceil}[1]{\lceil#1\rceil}
\newcommand{\infw}[1]{%
  \ifcat\noexpand#1\relax\bm{#1}
  \else\mathbf{#1}\fi}          
\newcommand{\A}{\mathcal{A}}
\newcommand{\B}{\mathcal{B}}
\newcommand{\N}{\mathbb{N}}
\newcommand{\Z}{\mathbb{Z}}
\newcommand{\Q}{\mathbb{Q}}
\begin{document}
  \title{Automatic sequences based on Parry or Bertrand numeration systems}
  \author[1]{Adeline Massuir}
  \author[2,3,4]{Jarkko Peltomäki}
  \author[1]{Michel Rigo}
  \affil[1]{University of Liège, Department of Mathematics, Liège, Belgium}
  \affil[2]{The Turku Collegium for Science and Medicine TCSM, University of Turku, Turku, Finland}
  \affil[3]{Turku Centre for Computer Science TUCS, Turku, Finland}
  \affil[4]{University of Turku, Department of Mathematics and Statistics, Turku, Finland}
  \date{}
  \maketitle
  \blfootnote{E-mail addresses: \href{mailto:a.massuir@uliege.be}{a.massuir@uliege.be} (A. Massuir), \href{mailto:r@turambar.org}{r@turambar.org} (J. Peltomäki), \href{mailto:m.rigo@uliege.be}{m.rigo@uliege.be} (M. Rigo).} 
  \begin{abstract}
    We study the factor complexity and closure properties of automatic sequences based on Parry or Bertrand numeration
    systems. These automatic sequences can be viewed as generalizations of the more typical $k$-automatic sequences and
    Pisot-automatic sequences. We show that, like $k$-automatic sequences, Parry-automatic sequences have sublinear
    factor complexity while there exist Bertrand-automatic sequences with superlinear factor complexity. We prove that
    the set of Parry-automatic sequences with respect to a fixed Parry numeration system is not closed under taking
    images by uniform substitutions or periodic deletion of letters. These closure properties hold for $k$-automatic
    sequences and Pisot-automatic sequences, so our result shows that these properties are lost when generalizing to
    Parry numeration systems and beyond. Moreover, we show that a multidimensional sequence is $U$-automatic with
    respect to a positional numeration system $U$ with regular language of numeration if and only if its $U$-kernel is
    finite.
  \end{abstract}

\section{Introduction}
Roughly speaking, an automatic sequence is an infinite word over a finite alphabet such that its $n$th symbol is
obtained as the output given by a deterministic finite automaton fed with the representation of $n$ in a convenient
numeration system. Precise definitions are given in \autoref{ssec:as}.

If we consider the usual base-$k$ numeration systems, then we get the family of $k$-automatic sequences \cite{All}. These words are images under a coding of a fixed point of a substitution of constant length. On a larger scale, if one considers abstract numeration systems based on a regular language (see for instance \cite[Chap.~3]{cant} or \cite{Rigo}), then we get the family of morphic words. Between these two extremes, we have the automatic sequences based on Pisot, Parry, and Bertrand numeration systems (the definitions are given in \autoref{ssec:num}), and we have the following hierarchy:
\begin{eqnarray*}
  & \text{Integer base system}\subsetneq \text{Pisot systems} \subsetneq \text{Parry systems}\\ \subsetneq & \text{Bertrand systems with a regular numeration language} \subsetneq \text{Abstract numeration systems}.
\end{eqnarray*}
Abstract numeration systems are uniquely based on the genealogical ordering of the words belonging to a regular language. This is contrasting with the more restricted case, treated in this paper, of positional numeration systems based on an increasing sequence of integers: a digit occurring in $n$th position is multiplied by the $n$th element of the underlying sequence. 

The Pisot-automatic sequences behave in many respects like $k$-automatic sequences. Most importantly, in
both cases automatic sets have a characterization in terms of first-order logic. This characterization in the
$k$-automatic case is due to B\"uchi \cite{Buchi1960}, and it was generalized to the Pisot case by Bruy\`ere and Hansel
\cite{BH}; also see \cite[Chap.~3]{Rigo} and the references therein. Now by using the logical characterization, it is
particularly straightforward to show that both the class of $k$-automatic sequences and the class of Pisot-automatic
sequences enjoy many closure properties. For instance, both classes are closed under taking images by a uniform
substitution and under periodic deletion of letters; these are classical results of Cobham \cite{Cobham1972}. For more
closure properties, see \cite[Chap.~6.8]{All}.

In this paper, we study if some properties common to $k$-automatic sequences and Pisot-automatic sequences also hold for Parry-automatic sequences or more general automatic sequences. In a sense, we show that the generalization to Pisot numeration systems is the broadest possible generalization if the goal is to preserve the many good properties of $k$-automatic sequences.

It has been known before that a logical characterization no longer exists for Parry-automatic sequences \cite{Frougny1997}. We show that the closure properties mentioned above break when generalizing from Pisot to Parry and obtain as a corollary yet another proof showing that no logical characterization indeed exists for these sequences.

In combinatorics on words and in symbolic dynamics, the factor complexity of infinite words is often of interest. It was famously shown by Pansiot \cite{Pansiot1984} that the factor complexity of an infinite word generated by a substitution is in one of the following five classes: $\mathcal{O}(1)$, $\Theta(n)$, $\Theta(n \log n)$, $\Theta(n \log \log n)$, or $\Theta(n^2)$. Previously, it has been known that the factor complexity of a $k$-automatic sequence is sublinear (that is, it is in $\mathcal{O}(1)$ or $\Theta(n)$) \cite{Cobham1972}, \cite[Thm.~10.3.1]{All}. We extend this result and show that the factor complexity of any Parry-automatic sequence is sublinear. In contrast, we show by an explicit example that there exists a Bertrand-automatic sequence of superlinear complexity.

A well-known result concerning $k$-automatic sequences is their characterization in terms of the $k$-kernel originally due to Eilenberg \cite{Eilenberg}. This was generalized in \cite{RigoMaes} for all sequences associated with abstract numeration systems. The multidimensional version of this generalization \cite[Prop.~32]{RigoMaes} however needs an additional assumption that is not required in the $k$-automatic case. We show in this paper that this additional assumption is unnecessary also for positional numeration systems with a regular numeration language. 


This paper is organized as follows. In \autoref{sec:basics}, we recall needed results and notation on numeration systems and automatic sequences. Then in \autoref{sec:fac} we study the factor complexity of Parry-automatic sequences, and in \autoref{sec:closure}, we show that the closure properties of Pisot-automatic sequences do not hold for Parry-automatic sequences. The paper is concluded by \autoref{sec:multi}, where the relationship of $U$-automaticity and the finiteness of the $U$-kernel is studied in the multidimensional setting.


\section{Basics}\label{sec:basics}
\subsection{Background on Numeration Systems}\label{ssec:num}

For general references on numeration systems and words, we refer the reader to \cite{cant,Lot,Rigo}. Let us first consider the representation of integers. A \emph{positional numeration system}, or simply, a \emph{numeration system}, is an increasing sequence $U=(U_n)_{n\ge 0}$ of integers such that $U_0=1$ and $C_U:=\sup_{n\ge 0}\ceil{U_{n+1}/U_n}<+\infty$. We let $A_U$ be the integer alphabet $\{0,\ldots,C_U-1\}$. The \emph{greedy representation} of the positive integer $n$ is the word $\rep_U(n)=w_{\ell-1}\cdots w_0$ over $A_U$ satisfying
\begin{equation*}
  \sum_{i=0}^{\ell-1} w_i\, U_i=n, \ w_{\ell-1}\neq 0,\ \text{ and }\ \forall j\in\{1,\ldots,\ell\},\quad \sum_{i=0}^{j-1}w_i\, U_i<U_j.
\end{equation*}
We set $\rep_U(0)$ to be the empty word $\varepsilon$. 
The language $\rep_U(\N)$ is called the \emph{numeration language}. A set
$X$ of integers is \emph{$U$-recognizable} if $\rep_U(X)$ is regular,
i.e., accepted by a finite automaton.  The \emph{numerical value} $\val_U\colon\Z^*\to\N$ maps a
word $d_{\ell-1}\cdots d_0$ over any alphabet of integers to the number $\sum_{i=0}^{\ell-1} d_iU_i$.

Recall that the \emph{genealogical ordering} orders words from a language first by length and then by the lexicographic
ordering (induced, in this paper, typically by the natural order of the digits).

\begin{definition}\label{def:bertrand}
  A numeration system $U$ is a \emph{Bertrand numeration system} if, for all $w \in A_U^*$, $w \in \rep_U(\N)$ if and
  only if $w0 \in \rep_U(\N)$.
\end{definition}

\begin{Example}\label{exa:1}
  The usual base-$k$ numeration system $(k^n)_{n\ge 0}$ is a Bertrand numeration system.
  Taking $F_0=1$, $F_1=2$, and $F_{n+2}=F_{n+1}+F_n$ for $n\ge 0$ gives the Fibonacci numeration system $F=(F_n)_{n\ge 0}$, which is a Bertrand numeration system: $\rep_F(\N)=1\{0,01\}^*\cup\{\varepsilon\}$. If we slightly modify the Fibonacci system by taking the initial conditions $U_0=1$,
  $U_1=3$, we get a numeration system $(U_n)_{n\ge 0}=(1,3,4,7,11,18,29,47,\ldots)$,
which is no longer a Bertrand system. Indeed, $2$ is the greedy representation of an integer but $20$ is not because $\rep_U(\val_U(20))=102$.
\end{Example}

\begin{Example}\label{exa:nonParry}
  The numeration system $B$ given by the recurrence $B_n = 3 B_{n-1} + 1$ for all $n \ge 1$ and $B_0 = 1$ is such that
  $0^*\rep_B(\N)=\{0,1,2\}^* (\{\varepsilon\} \cup 3\, 0^*)$; see \cite[p.~131]{Hollander1998}. The automaton accepting
  the language $0^* \rep_B(\N)$ is depicted in \autoref{fig:aaab}. By its simple form, it is obvious that it is a
  Bertrand numeration system. Notice that the sequence $(B_n)_{n\ge 0}$ also satisfies the homogeneous linear recurrence
  $B_n = 4 B_{n-1} - 3 B_{n-2}$.
\end{Example}

\begin{figure}
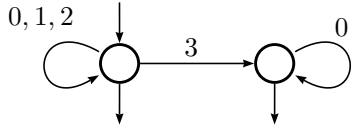

        \centering
        \VCDraw{%
        \begin{VCPicture}{(-2,-2)(2,2)}
 \StateVar[]{(-1.7,0)}{s}
 \StateVar[]{(1.7,0)}{t} 
\Initial[n]{s}
\Final[s]{s}
\Final[s]{t}
\LoopW{s}{0,1,2}
\LoopE{t}{0}
\EdgeL{s}{t}{3}
\end{VCPicture}
}
        \caption{The canonical automaton accepting $\{0,1,2\}^* (\{\varepsilon\} \cup 3\, 0^*)$.}
        \label{fig:aaab}
\end{figure}

There is a link between the representation of integers and the representation of real numbers. Let $\beta>1$ be a real number. The \emph{$\beta$-expansion} of a real number $x\in[0,1]$ is the sequence $d_\beta(x)=(x_i)_{i\ge 1}\in\N^\omega$ that satisfies
\[x=\sum_{i=1}^{+\infty}x_i\beta^{-i}\]
and which is the maximal element in $\N^\omega$ having
this property with respect to the lexicographic order over $\N$. Notice that $\beta$-expansions can be obtained by a greedy algorithm and they only contain letters (digits) over the alphabet $A_\beta=\{0,\ldots,\ceil{\beta}-1\}$. By $\Fact(D_\beta)$, we denote the set of finite factors occurring in the
$\beta$-expansions of the real numbers in $[0,1)$. 

    \begin{definition}
        If $d_\beta(1)=t_1\cdots
t_m0^\omega$, with $t_1,\ldots,t_m\in A_\beta$ and $t_m\neq
0$, then we say that $d_\beta(1)$ is \emph{finite} and we set
$d_\beta^*(1)=(t_1\cdots t_{m-1}(t_m-1))^\omega$. Otherwise, we set
$d_\beta^*(1)=d_\beta(1)$. An equivalent definition is to set $d_\beta^*(1)=\lim_{x\to 1^-}d_\beta(x)$. When $d_\beta^*(1)$ is (ultimately) periodic,
then $\beta$ is said to be a \emph{Parry number}.
    \end{definition}

\begin{definition}
    Let $\beta>1$ be a real number such that $d_\beta^*(1)=(t_i)_{i\ge 1}$. The numeration system $U_\beta=(U_n)_{n\ge 0}$ canonically associated with $\beta$ is defined by 
    \begin{equation}
        \label{eq:bertrand}
        U_n=t_1U_{n-1}+\cdots +t_nU_0+1, \quad \forall n\ge 0.
    \end{equation}
\end{definition}
As a consequence of Bertrand's theorem, see \cite{Be} or \cite[Chap.~2]{cant}, the numeration system $U_\beta$ associated with $\beta$ satisfies  
\begin{equation}
    \label{eq:berthm}
    0^*\rep_{U_\beta}(\N)=\Fact(D_\beta).
\end{equation}
Thus for all $\beta > 1$, the canonical numeration system $U_\beta$ associated with $\beta$ is a Bertrand numeration system because $w \in \Fact(D_\beta)$ if and only if $w0 \in \Fact(D_\beta)$ for all $w \in A_\beta^*$.

    \begin{Remark}
 If $\beta$ is a Parry number, then the canonical numeration system $U_\beta$ satisfies a linear recurrence equation with integer coefficients which can be obtained from~\eqref{eq:bertrand}. 
    \end{Remark}

    \begin{definition}
        A numeration system $U$ is a \emph{Parry numeration system} if there exists a Parry number $\beta$ such that $U=U_\beta$.
    \end{definition}

\begin{Example}
  Consider the Golden mean $\beta=(1+\sqrt{5})/2$. We have $d_\beta(1)=11$ and $d_\beta^*(1)=(10)^\omega$, so the Golden mean is a Parry number. It is straightforward to deduce from \eqref{eq:bertrand} that the associated Parry numeration system is the Fibonacci system of \autoref{exa:1} defined by the recurrence $F_{n+2}=F_{n+1}+F_n$ and the initial conditions $F_0=1$, $F_1=2$.
\end{Example}

  \begin{lemma}\label{lem:strict_subset}
    The set of Parry numeration systems is a strict subset of the set of Bertrand numeration systems.
  \end{lemma}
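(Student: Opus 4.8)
The plan is to prove the two required containments separately: first that every Parry numeration system is a Bertrand numeration system, and then that this containment is strict by exhibiting an explicit Bertrand system that fails to be Parry.

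For the containment, suppose $U$ is a Parry numeration system, so that $U = U_\beta$ for some Parry number $\beta$. This direction is already essentially contained in the discussion following \eqref{eq:berthm}: there one observes that $0^*\rep_{U_\beta}(\N) = \Fact(D_\beta)$ and that $w \in \Fact(D_\beta)$ if and only if $w0 \in \Fact(D_\beta)$ for all $w \in A_\beta^*$, which is exactly the defining condition of \autoref{def:bertrand}. Hence $U_\beta$, and therefore every Parry system, is a Bertrand system, and it remains only to establish strictness.

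For strictness, I would take the system $B$ of \autoref{exa:nonParry}, defined by $B_0 = 1$ and $B_n = 3B_{n-1}+1$, which is already noted there to be Bertrand. The task is to check that $B$ is not Parry. Assume toward a contradiction that $B = U_\beta$ for some Parry number $\beta$, and write $d_\beta^*(1) = (t_i)_{i\ge 1}$. Then \eqref{eq:bertrand} reads $B_n = t_1 B_{n-1} + \cdots + t_n B_0 + 1$ for every $n$, and since $B_0 = 1$ this pins down the digits recursively by $t_n = B_n - \sum_{i=1}^{n-1} t_i B_{n-i} - 1$. Comparing with $B_n = 3B_{n-1}+1$, the case $n = 1$ gives $t_1 = 3$, and a one-line induction then yields $t_n = 0$ for all $n \ge 2$; that is, $d_\beta^*(1) = 30^\omega$.

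The contradiction, which is the heart of the argument, is that $30^\omega$ cannot be $d_\beta^*(1)$ for any $\beta$. On one hand, the digit $t_1 = 3$ must lie in the alphabet $A_\beta = \{0,\ldots,\ceil{\beta}-1\}$, forcing $\ceil{\beta} \ge 4$ and hence $\beta > 3$. On the other hand, $d_\beta^*(1)$ represents $1$, so summing the series gives $\sum_{i\ge 1} t_i\,\beta^{-i} = 3\beta^{-1} = 1$, whence $\beta = 3$. These two requirements are incompatible, so no Parry number $\beta$ satisfies $B = U_\beta$, and $B$ is not Parry. The only point demanding care is precisely this incompatibility; an equivalent and equally safe route is to invoke the standard fact that $d_\beta^*(1)$ never terminates in $0^\omega$ (being either the non-terminating expansion $d_\beta(1)$ or a purely periodic sequence whose period begins with $\floor{\beta}\ge 1$), a property that $30^\omega$ manifestly violates. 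Combining this with the first part yields the strict inclusion.
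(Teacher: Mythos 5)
Your proof is correct and follows essentially the same route as the paper: Parry $\subseteq$ Bertrand is read off from \eqref{eq:berthm}, and strictness uses the same counterexample $B$ from \autoref{exa:nonParry}, arriving at the same contradiction that $\beta$ would have to equal $3$ while the digit $3$ occurs in the numeration language. The only (valid) variation is that you pin down $d_\beta^*(1)=30^\omega$ from the recurrence \eqref{eq:bertrand}, whereas the paper reads it off from the lexicographically greatest words $30^{n-1}$ of $0^*\rep_B(\N)$.
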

\begin{proof}
  We have already deduced from \eqref{eq:berthm} that Parry numeration systems are Bertrand numeration systems. Now consider the Bertrand system $B=(B_n)_{n\ge 0}$ of \autoref{exa:nonParry}. We will show that there is no $\beta>1$ such that $B=U_\beta$. Proceed by contradiction. Assume that there exists $\beta$ such that $B$ is the numeration system canonically associated with $\beta$. The greatest word of length $n$ for the lexicographical order in $0^*\rep_B(\N)$ is $30^{n-1}$. Consequently, we have $1=3/\beta$. The Parry numeration system $U_3$ is the classical base-$3$ system and $0^*\rep_{U_3}(\N)=\{0,1,2\}^*$, which differs from $0^*\rep_B(\N)$. This is a contradiction.
\end{proof}

\begin{theorem}[Parry \cite{parry}]\label{the:parry}
    A sequence $x=(x_i)_{i\ge 1}$ over $\N$ is
    the $\beta$-expansion of a real number in $[0,1)$ if and only if
    $(x_{n+i})_{i\ge 1}$ is lexicographically less than 
    $ d_\beta^*(1)$ for all $n \geq 0$.
\end{theorem}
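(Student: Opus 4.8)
The plan is to work directly with the greedy structure of $\beta$-expansions through the $\beta$-transformation $T_\beta\colon[0,1)\to[0,1)$, $T_\beta(x)=\beta x-\floor{\beta x}$, for which the $\beta$-expansion of $x\in[0,1)$ satisfies $x_n=\floor{\beta\,T_\beta^{n-1}(x)}$ and, crucially, the shifted sequence $(x_{n+i})_{i\ge1}$ is exactly the $\beta$-expansion of $T_\beta^n(x)\in[0,1)$. Write $\sigma w=(w_{i+1})_{i\ge1}$ for the shift, let $\prec$ denote the strict lexicographic order and $\preceq$ its reflexive closure, and set $v(w)=\sum_{i\ge1}w_i\beta^{-i}$ whenever this series converges. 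I will use freely that $d_\beta^*(1)=\lim_{x\to1^-}d_\beta(x)=\sup_{x<1}d_\beta(x)$ and write $d_\beta^*(1)=(t_i^*)_{i\ge1}$.

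For the forward implication, if $x$ is the $\beta$-expansion of some $y\in[0,1)$ then each shift $\sigma^n x$ is the $\beta$-expansion of $T_\beta^n(y)\in[0,1)$, so it suffices to show $d_\beta(z)\prec d_\beta^*(1)$ for every $z\in[0,1)$. First I would establish the monotonicity lemma: for $z,z'\in[0,1)$ one has $z<z'$ if and only if $d_\beta(z)\prec d_\beta(z')$. The nontrivial direction rests on the greedy tail inequality $\sum_{i>k}z_i\beta^{-i}<\beta^{-k}$ (equivalently $T_\beta^k(z)\in[0,1)$): comparing two expansions at their first difference $k$ and applying this bound to the tails shows that a strictly larger digit at position $k$ forces a strictly larger real. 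Since $d_\beta$ is then increasing and $d_\beta^*(1)$ is its left limit at $1$, choosing any $z<z'<1$ gives $d_\beta(z)\prec d_\beta(z')\preceq d_\beta^*(1)$, as required.

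For the converse, assume $\sigma^n x\prec d_\beta^*(1)$ for all $n\ge0$. The shift condition forces every digit to be at most $t_1^*\le\floor\beta$, so all the series $r_n:=v(\sigma^n x)$ converge, with $r_0=v(x)$ and $\beta r_n=x_{n+1}+r_{n+1}$ by construction. To recognise $x$ as the greedy expansion of $r_0$ it is enough to prove $r_{n+1}\in[0,1)$ for every $n$, for then $x_{n+1}=\floor{\beta r_n}$ is exactly the greedy digit and $r_{n+1}=T_\beta(r_n)$, while $r_0\in[0,1)$ identifies the represented real. Since shifts of shifts are shifts, everything reduces to the single Key Claim: \emph{if every shift of $w$ is $\prec d_\beta^*(1)$, then $v(w)<1$.}

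This Key Claim is the main obstacle, because lexicographic order controls $v(\cdot)$ only under shift-invariant admissibility, not for unrestricted sequences; the argument must therefore avoid circularity. I would first check $v(d_\beta^*(1))=1$ directly: in the infinite case $d_\beta^*(1)=d_\beta(1)$ represents $1$ by definition, and in the finite case a short geometric-series computation using $\sum_{i=1}^m t_i\beta^{-i}=1$ shows the purely periodic word $(t_1\cdots t_{m-1}(t_m-1))^\omega$ also has value $1$. Next, let $M$ be the supremum of $v(w)$ over all $w$ whose shifts are $\preceq d_\beta^*(1)$; the digit bound makes $M$ finite. Expanding any such $w\neq d_\beta^*(1)$ at its first difference $k$ with $d_\beta^*(1)$ and substituting $v(d_\beta^*(1))=1$ gives
\[
  v(w)=1+\beta^{-k}\bigl[(w_k-t_k^*)+v(\sigma^k w)-v(\sigma^k d_\beta^*(1))\bigr],
\]
where $w_k\le t_k^*-1$, $v(\sigma^k w)\le M$, and $v(\sigma^k d_\beta^*(1))\ge0$. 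Since $\beta^{-k}\le\beta^{-1}$, assuming $M\ge1$ yields $v(w)\le 1+\beta^{-1}(M-1)$; taking the supremum forces $M\le 1+\beta^{-1}(M-1)$, hence $M\le1$ because $\beta>1$. Finally, strictness follows by revisiting the displayed identity under the strict hypothesis: equality $v(w)=1$ would require $v(\sigma^k d_\beta^*(1))=0$, i.e.\ $d_\beta^*(1)$ ending in $0^\omega$, which never occurs since $d_\beta^*(1)$ is either the nonterminating expansion of $1$ or a nonzero purely periodic word. Thus $v(w)<1$, which closes the converse and the proof.
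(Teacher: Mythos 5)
The paper offers no proof of this statement: it is quoted as Parry's theorem with a bare citation to \cite{parry}, so there is no in-house argument to compare yours against. Judged on its own, your proof is correct and amounts to a self-contained reconstruction of the classical argument (close in spirit to the treatment in \cite[Chap.~7]{Lot}). The forward direction via $\sigma^n d_\beta(y)=d_\beta(T_\beta^n(y))$, the strict monotonicity of $y\mapsto d_\beta(y)$ obtained from the greedy tail bound $\sum_{i>k}y_i\beta^{-i}<\beta^{-k}$, and the identification of $d_\beta^*(1)$ with the lexicographic supremum of $d_\beta$ over $[0,1)$ is complete. The converse is the delicate half, and your Key Claim handles it without circularity: the first-difference identity together with $v(d_\beta^*(1))=1$, the shift-invariance of the set $\{w\mid \sigma^nw\preceq d_\beta^*(1)\ \forall n\ge 0\}$, and the finiteness of $M$ do force $M\le 1$, and strictness then follows from $v(\sigma^k d_\beta^*(1))>0$, which holds because no shift of $d_\beta^*(1)$ is $0^\omega$ (its periodic part contains a nonzero digit, the same fact the paper uses to prove primitivity of $\mathcal{A}_\beta$). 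Two things to make explicit in a polished write-up: that the greedy recipe $x_n=\floor{\beta T_\beta^{n-1}(x)}$ produces the lexicographically maximal representation, which is the paper's actual definition of $d_\beta(x)$; and the verification that $d_\beta^*(1)$ is an upper bound for $\{d_\beta(z)\mid z\in[0,1)\}$, which you invoke \emph{freely} but which is exactly what makes the forward direction strict. Neither is a gap --- both are routine consequences of the monotonicity lemma you already prove.
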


As a consequence of this result, with any Parry number $\beta$ is canonically associated a deterministic finite
automaton $\mathcal{A}_\beta=(Q_\beta,q_{\beta,0},F_\beta,A_\beta,\delta_\beta)$
accepting the language $\Fact(D_\beta)$. Otherwise stated, 
the numeration language of a Parry numeration system is regular. 
This automaton $\mathcal{A}_\beta$ has a special form. Let $d_\beta^*(1)=t_1\cdots
t_i(t_{i+1}\cdots t_{i+p})^\omega$ where $i\ge0$ and $p\ge1$ are the
minimal preperiod and period respectively. The set of states of $\mathcal{A}_\beta$
is $Q_\beta=\{q_{\beta,0},\ldots, q_{\beta,i+p-1}\}$. All states are
final. For every $j\in\{1,\ldots,i+p\}$, we have $t_j$ edges
$q_{\beta,j-1}\to q_{\beta,0}$ labeled by $0,\ldots,t_j-1$ and, for
$j<i+p$, one edge $q_{\beta,j-1}\to q_{\beta,j}$ labeled by
$t_j$. There is also an edge $q_{\beta,i+p-1} \to q_{\beta,i}$ labeled
by $t_{i+p}$.  See, for instance, \cite{FS,Rigo}. Note that in
\cite[Thm.~7.2.13]{Lot}, $\mathcal{A}_\beta$ is shown to be the trim
minimal automaton of $\Fact(D_\beta)$.

\begin{Remark}\label{rem:parry_prim}
    Let $\beta$ be a Parry number. The automaton $\mathcal{A}_\beta$ is well-known to be primitive (see, e.g., \cite[Chap.~3]{LindMarcus1995}). Indeed, the periodic part of $d_\beta^*(1)$ contains at least a nonzero digit. Consequently, there is a path from every state of $\mathcal{A}_\beta$ to the initial state $q_{\beta,0}$. Moreover, there is a loop on $q_{\beta,0}$ with label $0$. Hence $\mathcal{A}_\beta$ is irreducible (i.e., strongly connected) and aperiodic (the gcd of length of the cycles going through any state is $1$). The conclusion follows.
\end{Remark}

A \emph{Pisot number} is an algebraic integer $\beta>1$ whose conjugates have modulus less than $1$. A \emph{Salem number} is an is an algebraic integer $\beta>1$ whose conjugates have modulus less than or equal to $1$ and at least one has modulus equal to $1$. If $\beta$ is a Pisot number, then $U_\beta$ has many interesting properties \cite{BH,Fr,Rigo}: $\rep_{U_\beta}(\N)$ is regular, normalization $w\mapsto \rep_{U_\beta}(\val_{U_\beta}(w))$ (and thus addition) is computable by a finite automaton, and $U_\beta$-recognizable sets are characterized in terms of first order logic.

   \begin{definition}
     A numeration system $U$ is a \emph{Pisot numeration system} if there exists a Pisot number $\beta$ such that
     $U=U_\beta$.
   \end{definition}

   \begin{Remark}
     Pisot numbers are Parry numbers \cite{Schmidt1980}, so Pisot numeration systems belong to the set of Parry numeration
     systems. The numeration system~\eqref{eq:Un} studied in \autoref{sec:closure} is Parry but not Pisot, so this inclusion is
     strict. Further, the Fibonacci numeration system of \autoref{exa:1} is a Pisot numeration system because the
     Golden mean (the largest root of the polynomial $X^2 - X - 1$) is a Pisot number. As this numeration system is
     clearly not a base-$k$ numeration system, we see that base-$k$ numeration systems are strictly included in Pisot
     numeration systems.
   \end{Remark}

   \begin{Remark}
Let $\mathcal{B}$ be the set of Bertrand numeration systems. Let $\mathcal{R}$ be the set of numeration systems $U$
whose numeration language $\rep_U(\N)$ is regular. The three sets $\mathcal{B}\cap \mathcal{R}$, $\mathcal{B}\setminus
\mathcal{R}$ and $\mathcal{R}\setminus \mathcal{B}$ are nonempty. For instance, the modified Fibonacci system of
\autoref{exa:1} belongs to $\mathcal{R}\setminus \mathcal{B}$. All Parry numeration systems and the Bertrand system of
\autoref{exa:nonParry} belong to $\mathcal{B}\cap \mathcal{R}$.

        If $\beta$ is not a Parry number, for instance when $\beta$ is transcendental, then the numeration language $\rep_{U_\beta}(\N)$ is not regular even though $U_\beta$ is a Bertrand system. Hence $\mathcal{B}\setminus\mathcal{R}$ is nonempty. 

We will often make the assumption that we are dealing with positional numeration systems $U$ such that $\rep_U(\N)$ is
regular. This is particularly important when we will be dealing with finite $U$-kernels in \autoref{sec:multi}. This
assumption is in fact somewhat restrictive. In \cite{Shallit1994}, Shallit proves that if $\rep_U(\N)$ is regular, then
$(U_n)_{n\ge 0}$ must satisfy a homogeneous linear recurrence.
    \end{Remark}

\subsection{Automatic Sequences}\label{ssec:as}

\begin{definition}\label{def:automatic}
  Let $U$ be a numeration system. An infinite word $\infw{x}=(x_n)_{n\ge 0}$ over an alphabet $B$ is \emph{$U$-automatic}, i.e., it is an \emph{$U$-automatic sequence}, if there exists a complete DFAO (deterministic finite automaton with output) $(Q,q_0,A_U,\delta,\tau)$ with transition function $\delta\colon Q\times A_U\to Q$ and output function $\tau\colon Q\to B$ such that $\delta(q_0,0)=q_0$ and
  \begin{equation*}
    x_n=\tau(\delta(q_0,\rep_U(n))),\quad \forall n\ge 0.
  \end{equation*}
The infinite word $\infw{x}$ is \emph{$k$-automatic} (resp. \emph{Parry-automatic}, \emph{Bertrand-automatic}) if $U=(k^n)_{n\ge 0}$ for an integer $k \geq 2$ (resp. $U$ is a Parry numeration system, $U$ is a Bertrand numeration system).
\end{definition}

The next result is classical, see for instance \cite{Rigo2000}.

\begin{theorem}\label{the:Uautomatic}
  Let $U$ be a numeration system such that $\rep_U(\N)$ is regular.     An infinite word $\infw{x}=(x_n)_{n\ge 0}$ over $A$ is $U$-automatic if and only if, for all $a\in A$, the set $\{j\ge 0\mid x_j=a\}$ is $U$-recognizable.
\end{theorem}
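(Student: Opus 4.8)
The plan is to prove both directions of this equivalence, which characterizes $U$-automaticity via the $U$-recognizability of the level sets $\{j \ge 0 \mid x_j = a\}$. The key tool is the hypothesis that $\rep_U(\N)$ is regular, which ensures that everything stays within the realm of finite automata.

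First I would prove the forward direction. Suppose $\infw{x}$ is $U$-automatic, witnessed by a complete DFAO $(Q, q_0, A_U, \delta, \tau)$ with $\delta(q_0, 0) = q_0$. Fix a letter $a \in A$. The idea is to reinterpret the DFAO as an ordinary DFA by declaring the set of final states to be $F_a = \{q \in Q \mid \tau(q) = a\}$. Reading $\rep_U(j)$ from $q_0$, this automaton accepts exactly those words $w$ with $\tau(\delta(q_0, w)) = a$. Intersecting the accepted language with the regular language $\rep_U(\N)$ (which is regular by hypothesis, so the intersection is regular by closure of regular languages under intersection) yields exactly $\rep_U(\{j \ge 0 \mid x_j = a\})$. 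Hence this set is $U$-recognizable.

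For the converse, suppose that for each $a \in A$ the set $X_a = \{j \ge 0 \mid x_j = a\}$ is $U$-recognizable, i.e., $\rep_U(X_a)$ is regular, say accepted by a DFA $\mathcal{A}_a$. The plan is to build a single DFAO that simulates all the $\mathcal{A}_a$ in parallel via a product construction, together with a DFA recognizing $\rep_U(\N)$, and then define the output function so that a state whose component in $\mathcal{A}_a$ is accepting is assigned the output letter $a$. The main obstacle here is that this output function is only well defined if, for every $n$, the word $\rep_U(n)$ is accepted by exactly one of the automata $\mathcal{A}_a$. This consistency follows from the fact that the sets $X_a$ partition $\N$ (each position $n$ carries a unique letter $x_n$), so on the product of states reachable by reading valid representations $\rep_U(n)$, exactly one component is accepting; one can assign an arbitrary fixed output letter to the remaining ``garbage'' states. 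One must also ensure the DFAO is complete and satisfies the technical requirement $\delta(q_0, 0) = q_0$; this is arranged by noting that prepending leading zeros does not change $\val_U$, so reading $0$ from the initial product state should loop, which holds because the minimal automaton for a regular $\rep_U(\N)$-based language respects the leading-zero convention.

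I expect the forward direction to be essentially routine. The main technical care is needed in the converse, specifically in verifying the well-definedness of the output function (the partition argument) and in reconciling the DFAO normalization conditions of \autoref{def:automatic} with the product construction. Throughout, the hypothesis that $\rep_U(\N)$ is regular is what keeps the constructions finite-state and is precisely the assumption invoked in the theorem statement.
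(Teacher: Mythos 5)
The paper itself gives no proof of this statement; it is quoted as classical with a pointer to the reference \cite{Rigo2000}, so there is nothing to compare against line by line. Your argument is the standard one and is essentially correct: the forward direction (re-reading the DFAO as a DFA with final states $\tau^{-1}(a)$ and intersecting with the regular language $\rep_U(\N)$) is exactly right, and the converse via a product of DFAs for the languages $\rep_U(X_a)$, with the output of a reachable state determined by which component is accepting, is sound because the sets $X_a$ partition $\N$, so each $\rep_U(n)$ is accepted by exactly one $\mathcal{A}_a$. One detail in your converse is misstated rather than missing: the minimal DFA of $\rep_U(X_a)$ need \emph{not} loop on $0$ at its initial state (in base $2$, the minimal DFA of $\rep_2(\{1\})=\{1\}$ sends $0$ to a sink), so the normalization $\delta(q_0,0)=q_0$ required by \autoref{def:automatic} does not come for free from minimality. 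The standard repair is to build the product from automata for the languages $0^*\rep_U(X_a)$, which satisfy $0^{-1}\bigl(0^*\rep_U(X_a)\bigr)=0^*\rep_U(X_a)$ because no nonempty greedy representation begins with $0$; equivalently, one may simply redirect the $0$-transition out of the initial product state back to itself, which alters no value $\tau(\delta(q_0,\rep_U(n)))$. With that adjustment your proof is complete.
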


Let $k\ge 2$ be an integer. The \emph{$k$-kernel} of an infinite word $\infw{x}=(x_n)_{n\ge 0}$ over $A$ is the set of
its subsequences of the form
\begin{equation*}
  \{(x_{k^e n+d})_{n\ge 0}\in A^\N \mid e\ge 0, 0\le d<k^e\}.
\end{equation*}
Observe that an element of the $k$-kernel is obtained by considering those indices whose base-$k$ expansions end with
$\rep_k(d)$ (possibly preceded by some zeroes to get a suffix of length $e$). With this in mind, we introduce the more
general $U$-kernel of an infinite word.

\begin{definition}\label{def:Ukernel}
  Let $U$ be a numeration system and $s \in A_U^*$ be a finite word. Define the ordered set of integers 
  \begin{equation}\label{eq:Is}
    \mathcal{I}_s:=\val_U(0^*\rep_U(\N)\cap A_U^*s)=\{i(s,0)<i(s,1)<\cdots\}.
  \end{equation}
  Depending on $s$, it is possible for this set to be finite or empty. The \emph{$U$-kernel} of an infinite word
  $\infw{x}=(x_n)_{n\ge 0}$ over $B$ is the set
  \begin{equation*}
    \ker_U(\infw{x}):=\{(x_{i(s,n)})_{n\ge 0} \mid s\in A_U^*\}.
  \end{equation*}
  With the above remark, this set can contain finite or even empty subsequences.
\end{definition}

The next two results have been obtained in the general framework of abstract numeration systems; see \cite[Prop.~7]{RigoMaes} and \cite[Prop.~9]{RigoMaes}.

\begin{proposition}\label{prp:kernel_1D}
  Let $U$ be a numeration system such that $\rep_U(\N)$ is regular. A word $\infw{x}$ is $U$-automatic if and only if its $U$-kernel is finite.    
\end{proposition}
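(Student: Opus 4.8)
The plan is to reduce the statement to the fiberwise characterization of \autoref{the:Uautomatic}, which (given that $\rep_U(\N)$ is regular) says that $\infw{x}$ is $U$-automatic if and only if every fiber $X_a:=\{j\ge 0\mid x_j=a\}$ is $U$-recognizable, i.e.\ every $N_a:=\rep_U(X_a)$ is regular. I will pair this with the Myhill--Nerode characterization of regularity by \emph{right} residuals: a language $M\subseteq A_U^*$ is regular if and only if it has only finitely many right quotients $M s^{-1}:=\{v\mid vs\in M\}$ (apply ordinary Myhill--Nerode to the reverse of $M$). The bridge between these residuals and \autoref{def:Ukernel} rests on one elementary order identity: since greedy representations are simultaneously ordered numerically and genealogically, the numerical order on $\mathcal{I}_s$ coincides with the genealogical order on the words of $\rep_U(\N)$ that end in $s$, and because the suffix $s$ is fixed, this is the same as the genealogical order on their prefixes $v\in N s^{-1}$ (here $N:=\rep_U(\N)$).

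The key lemma I would establish is therefore a dictionary between the $s$-indexed kernel element and the $B$-labelled residual data $(N_a s^{-1})_{a\in B}$. Writing $v_0<v_1<\cdots$ for the genealogical enumeration of $N s^{-1}$, the order identity gives $x_{i(s,n)}=x_{\val_U(v_n s)}$, so the kernel element $(x_{i(s,n)})_{n\ge 0}$ is exactly this enumeration labelled by the unique $a$ with $v_n\in N_a s^{-1}$. Conversely, the family $(N_a s^{-1})_a$ partitions $N s^{-1}$ and records, prefix by prefix, the same labelling. Thus the kernel element for $s$ is a function of the tuple $(N_a s^{-1})_a$, and the tuple is a function of the pair consisting of the kernel element together with the set $N s^{-1}$.

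Both implications then fall out. For \emph{$U$-automatic $\Rightarrow$ finite kernel}: by \autoref{the:Uautomatic} each $N_a$ is regular, hence has finitely many right quotients; as $B$ is finite there are finitely many tuples $(N_a s^{-1})_a$, and by the dictionary each tuple determines the kernel element, so $\ker_U(\infw{x})$ is finite. For \emph{finite kernel $\Rightarrow$ $U$-automatic}: here the hypothesis that $\rep_U(\N)$ is regular is used decisively, as it forces $N$ to have only finitely many right quotients $N s^{-1}$. Since $N_a s^{-1}=\{v\in N s^{-1}\mid v \text{ is labelled } a\}$ is recovered from the pair (kernel element for $s$, set $N s^{-1}$), the number of distinct $N_a s^{-1}$ is at most the (finite) number of kernel elements times the (finite) number of residuals $N s^{-1}$. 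Hence each $N_a$ has finitely many right quotients, is regular, and \autoref{the:Uautomatic} yields $U$-automaticity.

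I expect the main obstacle to be bookkeeping at the boundary between the two descriptions rather than a deep difficulty. The set $\mathcal{I}_s$ in \autoref{def:Ukernel} is defined through $0^*\rep_U(\N)$, so leading zeros must be dealt with; numerically they are invisible, but one must check that the handful of small indices whose greedy representation is shorter than $s$ (so that $s$ is matched only after zero-padding) forms a finite exceptional set that does not affect any finiteness count. The second delicate point is to make the order identity fully rigorous: the numerical ordering of $\mathcal{I}_s$ must be read off uniformly in $s$ from the genealogy of $\rep_U(\N)$, which is precisely where the regularity (equivalently, the linear recurrence satisfied by $U$) does its work and why, in contrast to the multidimensional situation of \cite{RigoMaes}, no further hypothesis is needed in dimension one. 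As a sanity check, once greedy order is identified with genealogical order the system $U$ is seen to be an instance of an abstract numeration system, and the statement can also be obtained directly from \cite[Prop.~7 and Prop.~9]{RigoMaes} after matching the two notions of kernel.
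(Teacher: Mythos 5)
Your argument is correct, but it is worth noting that the paper offers no proof of this statement at all: it simply cites \cite[Prop.~7]{RigoMaes}. What you have written is essentially the one-dimensional specialization of the argument the paper does spell out in \autoref{sec:multi} for the two-dimensional version, so the comparison is best made against that proof. The core object is the same in both: the pair consisting of the residual $\mathcal{L}(s)=0^*\rep_U(\N)\cdot s^{-1}$ of the numeration language and the kernel element indexed by $s$, together with the ``dictionary'' lemma identifying the genealogical enumeration of $\mathcal{L}(s)\setminus 0A_U^*$ with the increasing enumeration of $\mathcal{I}_s$ (your order identity; this is exactly the paper's well-definedness check for its transition function). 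The differences are in the packaging. For \emph{finite kernel $\Rightarrow$ automatic}, the paper builds an explicit reversal-reading DFAO on states $(\mathcal{L}(s),\text{kernel element})$ and then invokes \autoref{pro:reversal}, whereas you convert the same finiteness count into finiteness of the right quotients of each fiber language $N_a$ and conclude via Myhill--Nerode and \autoref{the:Uautomatic}; both are legitimate, and yours avoids the detour through reversal-automaticity. For \emph{automatic $\Rightarrow$ finite kernel}, the paper (even in 2D) delegates to \cite{RigoMaes}, while you prove it directly from regularity of the $N_a$; that direction of your argument is sound. Two small repairs: you should run the whole dictionary with residuals of $0^*\rep_U(\N)$ and $0^*N_a$ rather than of $\rep_U(\N)$ and $N_a$, since otherwise the single exceptional index $\val_U(s)$ (when $s\in 0^*\rep_U(\N)$ is longer than the greedy representation it pads) is labelled by the kernel element but invisible to the tuple $(N_a s^{-1})_a$; this is exactly why the paper works with $\mathcal{L}(s)$ and the set $\mathcal{L}(s)\setminus 0A_U^*$ augmented by $\varepsilon$. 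And the order identity (genealogical order equals numerical order on greedy representations) holds for every positional numeration system and owes nothing to regularity; regularity is used only, as you correctly say elsewhere, to make $\{\mathcal{L}(s)\mid s\in A_U^*\}$ finite, and it is not equivalent to $U$ satisfying a linear recurrence (Shallit's result gives only one implication).
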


\begin{proposition}\label{pro:reversal}
    Let $U$ be a numeration system. If an infinite word is $U$-automatic, then  it is reversal-$U$-automatic, i.e., its $n$th term is obtained by reading the reversal of $\rep_U(n)$ in a DFAO.
\end{proposition}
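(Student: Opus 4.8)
The plan is to reverse the witnessing automaton by working inside its transition monoid. Suppose $\infw{x}=(x_n)_{n\ge 0}$ over $B$ is $U$-automatic, and let $(Q,q_0,A_U,\delta,\tau)$ be a complete DFAO as in \autoref{def:automatic}, so that $x_n=\tau(\delta(q_0,\rep_U(n)))$ for all $n$. For each letter $a\in A_U$ I would introduce the transformation $\delta_a\colon Q\to Q$ defined by $\delta_a(q)=\delta(q,a)$. Reading a word $w=a_1a_2\cdots a_\ell$ from left to right (most significant digit first) then amounts to applying the composite $\delta_{a_\ell}\circ\cdots\circ\delta_{a_1}$ to $q_0$, since $\delta(q,a_1\cdots a_\ell)=(\delta_{a_\ell}\circ\cdots\circ\delta_{a_1})(q)$. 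The key idea is to build a second DFAO that reads the digits in the opposite order and accumulates exactly this composite map, so that at the end it can recover $\delta(q_0,w)$ and output $\tau$ of it.

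Concretely, I would define the reversed DFAO $(Q',q_0',A_U,\delta',\tau')$ by taking the finite state set $Q'=Q^Q$ of all maps $Q\to Q$, the initial state $q_0'=\mathrm{id}_Q$, the transition function $\delta'(f,a)=f\circ\delta_a$, and the output function $\tau'(f)=\tau(f(q_0))$. A short induction on $\abs{v}$ then shows that for every $v=b_1\cdots b_m\in A_U^*$ one has
\[ \delta'(\mathrm{id}_Q,v)=\delta_{b_1}\circ\delta_{b_2}\circ\cdots\circ\delta_{b_m}, \]
the point being that each transition step composes one further $\delta_{b}$ on the right. Specializing to $v=w^R=a_\ell a_{\ell-1}\cdots a_1$ yields $\delta'(\mathrm{id}_Q,w^R)=\delta_{a_\ell}\circ\cdots\circ\delta_{a_1}$, which is precisely the map $q\mapsto\delta(q,w)$. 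Consequently $\tau'(\delta'(\mathrm{id}_Q,w^R))=\tau(\delta(q_0,w))$ for every $w\in A_U^*$.

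Applying this identity to $w=\rep_U(n)$ gives $\tau'(\delta'(\mathrm{id}_Q,\rep_U(n)^R))=\tau(\delta(q_0,\rep_U(n)))=x_n$, so the reversed DFAO computes $x_n$ by reading the reversal of $\rep_U(n)$, which is exactly the assertion that $\infw{x}$ is reversal-$U$-automatic. I expect the main delicacy to be purely bookkeeping: fixing the order of composition so that reading the digits least-significant-first really reconstructs the left-to-right action of $\delta$, rather than its transpose. Two features of the argument are worth highlighting: finiteness of $Q$ makes $Q^Q$ finite, so the new device is a genuine complete DFAO; and since the displayed identity holds for \emph{all} words $v$ and not merely for elements of $\rep_U(\N)$, the construction requires no regularity assumption on the numeration language, matching the full generality ``Let $U$ be a numeration system'' of the statement.
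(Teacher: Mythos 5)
Your construction is correct: composing the letter transformations $\delta_a$ in the state set $Q^Q$ with initial state $\mathrm{id}_Q$ and output $f\mapsto\tau(f(q_0))$ does yield $\tau'(\delta'(\mathrm{id}_Q,w^R))=\tau(\delta(q_0,w))$, and it needs no regularity of $\rep_U(\N)$. This is exactly the ``classical construction on automata defined from the DFAO'' that the paper invokes by citing Rigo--Maes, so your argument matches the intended proof.
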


Notice that the proof of the latter result only relies on classical constructions on automata defined from the DFAO
generating the $U$-automatic sequence. The same construction applies in multidimensional setting, and we shall make use
of this in \autoref{sec:multi}.

  \section{Factor Complexity}\label{sec:fac}
  The \emph{factor complexity function} $p_\infw{x}(n)$ of an infinite word $\infw{x}$ counts the number of factors of
  length $n$ occurring in $\infw{x}$. For more on factor complexity, see \cite[Chap.~3]{cant}.

  Let us recall the following result of Cobham.

  \begin{proposition}
    \cite{Cobham1972}, \cite[Thm.~10.3.1]{All} The factor complexity function of a $k$-automatic sequence is sublinear.
  \end{proposition}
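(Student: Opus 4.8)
The plan is to reduce to the morphic description of $k$-automatic sequences recalled in the introduction and then set up and solve a divide-and-conquer recurrence for the factor complexity. Concretely, I would first invoke Cobham's characterization: a $k$-automatic sequence $\infw{x}$ is the image under a coding $\tau$ of a fixed point $\infw{y}=\sigma^\omega(a)$ of a $k$-uniform morphism $\sigma$. Since $\tau$ is letter-to-letter, it maps the factors of $\infw{y}$ of length $n$ \emph{onto} the factors of $\infw{x}$ of length $n$, so that $p_\infw{x}(n)\le p_\infw{y}(n)$. It therefore suffices to prove that $p_\infw{y}(n)=\mathcal{O}(n)$.

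The structural heart of the argument exploits the identity $\infw{y}=\sigma(\infw{y})$, which exhibits $\infw{y}$ as the concatenation of the length-$k$ blocks $\sigma(y_0),\sigma(y_1),\ldots$. Consider a factor $w$ of length $n$ occurring at position $i$, and write $i=qk+r$ with $0\le r<k$. Setting $L=\lfloor n/k\rfloor+2$, the positions $i,\ldots,i+n-1$ are covered by the window of blocks corresponding to $y_q\cdots y_{q+L-1}$, and $w$ is recovered as the length-$n$ prefix of the word obtained by deleting the first $r$ letters of $\sigma(y_q\cdots y_{q+L-1})$. Hence the map sending $w$ to the pair consisting of the factor $y_q\cdots y_{q+L-1}$ of $\infw{y}$ and the offset $r\in\{0,\ldots,k-1\}$ is injective, which yields the recurrence
\[
  p_\infw{y}(n)\le k\cdot p_\infw{y}\bigl(\lfloor n/k\rfloor+2\bigr).
\]

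Finally I would solve this recurrence. Putting $a_0=n$ and $a_{j+1}=\lfloor a_j/k\rfloor+2$, one checks that $a_j\le n/k^j+\tfrac{2k}{k-1}\le n/k^j+4$, so after $m=\lceil\log_k n\rceil$ steps the argument $a_m$ is bounded by an absolute constant (say $5$). Iterating the recurrence $m$ times and using $k^m\le kn$ gives $p_\infw{y}(n)\le k^m\max_{1\le j\le 5}p_\infw{y}(j)\le kCn$, where $C$ depends only on $\sigma$. Thus $p_\infw{x}(n)=\mathcal{O}(n)$, that is, the factor complexity is sublinear in the sense used here.

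I expect the main obstacle to be the bookkeeping in the structural step: verifying that a window of $\lfloor n/k\rfloor+2$ letters together with one of the $k$ offsets genuinely recovers every length-$n$ factor and that the resulting encoding is injective, and then carefully controlling the additive constant $+2$ when unwinding the recurrence so that its accumulation does not spoil linearity. The combinatorial idea (every long factor is a short window seen through $\sigma$ up to a bounded offset) is clean, but these two estimates are exactly where an off-by-one error would turn a linear bound into a spurious superlinear one.
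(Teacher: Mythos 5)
Your argument is correct, and it is essentially a reconstruction of the classical proof that the paper does not reproduce: for this proposition the paper simply cites Cobham and \cite[Thm.~10.3.1]{All}, so there is no in-text proof to match, and your window-plus-offset recurrence $p_\infw{y}(n)\le k\cdot p_\infw{y}(\lfloor n/k\rfloor+2)$ with the telescoping estimate $k^{\lceil\log_k n\rceil}\le kn$ is exactly the standard route taken in that reference. The details check out: the covering condition $r+n\le Lk$ holds since $L=\lfloor n/k\rfloor+2\ge n/k+(k-1)/k$, each factor is recoverable from its pair (so your chosen-pair assignment is injective and the count $k\cdot p_\infw{y}(L)$ is an upper bound), and the accumulated additive constant stays below $2k/(k-1)\le 4$. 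It is worth contrasting this with what the paper actually does for its generalization (\autoref{thm:parry_sublinear}): there the substitution read off the product automaton is only \emph{quasi-uniform} ($\abs{\sigma^n(q)}=\Theta(\alpha^n)$ rather than exactly $k^n$), so the clean block decomposition with an offset $r\in\{0,\ldots,k-1\}$ that powers your recurrence is unavailable, and the paper instead invokes Pansiot's classification of factor complexities. Your approach buys an elementary, self-contained proof with explicit constants in the uniform case; the paper's approach buys generality (all Parry numeration systems) at the cost of relying on Pansiot's theorem as a black box.
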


  Next we generalize this result.

  \begin{theorem}\label{thm:parry_sublinear}
    The factor complexity function of a Parry-automatic sequence is sublinear.
  \end{theorem}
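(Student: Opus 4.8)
The plan is to realize a Parry-automatic sequence as the image under a coding of a fixed point of a suitable substitution, and then to invoke Pansiot's classification \cite{Pansiot1984} of the factor complexity of morphic words into the five classes $\mathcal{O}(1)$, $\Theta(n)$, $\Theta(n\log\log n)$, $\Theta(n\log n)$, $\Theta(n^2)$. The whole point is to show that the substitution we build lands in the linear (or bounded) class; in Pansiot's terminology this means proving that the substitution is \emph{quasi-uniform}, i.e.\ that it is growing and that the lengths $\abs{\sigma^n(a)}$ of the images of the letters all grow at the same exponential rate, within a bounded ratio of one another.

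So let $\infw{x}=(x_n)_{n\ge 0}$ be generated by a DFAO $\mathcal{D}=(Q',q_0',A_\beta,\delta',\tau')$ reading greedy representations in a Parry numeration system $U_\beta$, and let $\mathcal{A}_\beta=(Q_\beta,q_{\beta,0},Q_\beta,A_\beta,\delta_\beta)$ be the canonical automaton accepting $\Fact(D_\beta)=0^*\rep_{U_\beta}(\N)$ described after \autoref{the:parry}. Form the product automaton on $Q_\beta\times Q'$ and define a substitution $\sigma$ on this alphabet by following, from each state, the edges present in $\mathcal{A}_\beta$ in increasing order of their labels: for a state $(q,q')$, the word $\sigma(q,q')$ is the concatenation, over all edges $q\xrightarrow{a}\delta_\beta(q,a)$ of $\mathcal{A}_\beta$ ordered by $a$, of the letters $(\delta_\beta(q,a),\delta'(q',a))$. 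Because $\delta_\beta(q_{\beta,0},0)=q_{\beta,0}$ and $\delta'(q_0',0)=q_0'$, the image $\sigma(q_{\beta,0},q_0')$ begins with $(q_{\beta,0},q_0')$ and has length at least two, so $\sigma$ is prolongable there and admits a fixed point $\infw{u}$.

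The usual Dumont--Thomas/Rigo--Maes correspondence \cite{RigoMaes} then identifies $\infw{u}$: enumerating the words of a fixed length in $\Fact(D_\beta)$ in genealogical order lists exactly the integers $0,1,2,\dots$ through $\val_{U_\beta}$, and padding a greedy representation with leading zeros does not change the state reached in either automaton thanks to the $0$-loops at the initial states. Hence the $m$th letter of $\infw{u}$ is $(\delta_\beta(q_{\beta,0},\rep_{U_\beta}(m)),\delta'(q_0',\rep_{U_\beta}(m)))$, and applying the coding $\tau(q,q'):=\tau'(q')$ gives $\infw{x}=\tau(\infw{u})$. Since a coding never increases factor complexity, it suffices to bound $p_{\infw{u}}$.

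Finally, here is the heart of the argument, and the step I expect to be the real obstacle. For any letter $(q,q')$ the length $\abs{\sigma^n(q,q')}$ equals the number of paths of length $n$ starting at $q$ in $\mathcal{A}_\beta$, since the second coordinate merely records the DFAO state and never creates or destroys branches; writing $M_\beta$ for the adjacency matrix of $\mathcal{A}_\beta$, this is $(M_\beta^{\,n}\mathbf{1})_q$. Now $\mathcal{A}_\beta$ is primitive (Remark~\ref{rem:parry_prim}), so by Perron--Frobenius $M_\beta^{\,n}\sim\beta^{\,n}\,vw^{\mathsf{T}}$ with positive eigenvectors $v,w$, whence $(M_\beta^{\,n}\mathbf{1})_q=\Theta(\beta^{\,n})$ uniformly in $q$ and with no polynomial correction factor. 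Thus $\sigma$ is growing and all letters grow like $\Theta(\beta^n)$, i.e.\ $\sigma$ is quasi-uniform, and Pansiot's theorem forces $p_{\infw{u}}(n)=\mathcal{O}(n)$ (it is $\Theta(n)$ unless $\infw{u}$ is eventually periodic, in which case it is bounded). Therefore $p_{\infw{x}}(n)=\mathcal{O}(n)$ is sublinear. The crux is precisely the absence of a polynomial factor in this growth: it is guaranteed here by primitivity --- which in turn rests on the fact that the periodic part of $d_\beta^*(1)$ carries a nonzero digit --- and it is exactly this feature that can fail for a general Bertrand numeration system, which is what leaves room for superlinear complexity beyond the Parry case.
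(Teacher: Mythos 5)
Your proposal is correct and follows essentially the same route as the paper: product of the canonical automaton $\mathcal{A}_\beta$ with the DFAO, the associated substitution whose fixed point is coded to $\infw{x}$, quasi-uniformity via path counting and primitivity of $\mathcal{A}_\beta$ (Perron--Frobenius), and Pansiot's theorem. The only cosmetic difference is that you identify the growth rate explicitly as $\beta$ where the paper just calls it the Perron--Frobenius eigenvalue $\alpha$ of $\mathcal{A}_\beta$.
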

  \begin{proof}
    Let $U$ be a Parry numeration system having canonical automaton $\A$, and let $\infw{x}$ be an $U$-automatic
    sequence generated by a DFAO $\B$. 
The product automaton $\A \times \B$ has $Q_{\A\times\B}:=Q_\A\times Q_\B$ as set of states, the initial state $q_0$ is the pair made of the initial states of $\A$ and $\B$, and the transition function is given by 
    \begin{equation*}
      \delta_{\A \times \B}((q,q'),i)=(\delta_\A(q,i),\delta_\B(q',i)).
    \end{equation*}
We consider the automaton $\A \times \B$ as a DFAO by setting that the output
    function $\tau$ maps a state $(q_\A, q_\B)$ of $\A \times \B$ to the output of the state $q_\B$ of $\B$. It is
    clear that $\infw{x}$ is generated by $\A \times \B$.

    Based on the automaton $\A \times \B$, we can build a substitution $\sigma$ and consider the output function $\tau$
    as a coding such that $\infw{x} = \tau(\sigma^\omega(q))$ for some $q\in Q_{\A\times\B}$. The construction is
    classical, see for instance \cite[Lemma~2.28]{Rigo}. The substitution $\sigma$ is defined as follows
    \begin{equation*}
      \sigma((q_\A, q_\B))=(\delta_\A(q_\A,0), \delta_\B(q_\B,0))(\delta_\A(q_\A,1), \delta_\B(q_\B,1)) \cdots (\delta_\A(q_\A,C_U-1), \delta_\B(q_\B,C_U-1)).
    \end{equation*}
    In the latter expression, since $\mathcal{A}$ is in general not complete, if $\delta_\A(q_\A,j)$ is undefined,
    then $(\delta_\A(q_\A,j), \delta_\B(q_\B,j))$ is replaced by $\varepsilon$. Notice that the substitution $\sigma$ is defined over the alphabet $Q_{\A\times\B}$.  Since $\A \times \B$ has a loop with
    label $(0,0)$ on its initial state $q_0$, iterating $\sigma$ on this state generates the sequence of states
    $\sigma^\omega(q_0)$ in $\A \times \B$ reached from the initial state by the words of $\rep_U(\N)$ in genealogical
    order.

Since every state of a canonical automaton of a Parry
    numeration system is final, the coding $\tau$ is non-erasing. Then by \cite[Lemma~4.6.6]{cant}, the factor
    complexity of $\infw{x}$ is at most the factor complexity of $\sigma^\omega(q_0)$, so it is sufficient to show that a
    fixed point of $\sigma$ has sublinear complexity. This is accomplished as follows. First we establish that there
    exists a number $\alpha$ such that $\abs{\sigma^n(q)} = \Theta(\alpha^n)$ for every state $q\in Q_{\A\times\B}$. In other words, we
    show that the substitution $\sigma$ is \emph{quasi-uniform}. It then follows from the famous theorem of Pansiot
    \cite{Pansiot1984} that the factor complexity of a fixed point of $\sigma$ is sublinear; also see
    \cite[Thm.~4.7.47]{cant}.

    Let us define a projection mapping $\varphi\colon Q_{\A \times \B} \to Q_\A$ by setting $\varphi((q_\A, q_\B)) = q_\A$ if
    $(q_\A, q_\B)$ is a state of $\A \times \B$. By the definition of the product automaton $\A \times \B$, we have
    $\varphi(\delta_{\A \times \B}((q_\A, q_\B), a)) = \delta_\A(\varphi((q_\A, q_\B)), a)$ for all letters $a$ and all
    states $q_\A$ and $q_\B$. 

    Recall that given an automaton $\mathcal{C}$ with adjacency matrix $Adj(\mathcal{C})$, the entry $(Adj(\mathcal{C}))^n_{i,j}$ counts the number of distinct paths of length
    $n$ from state $i$ to state $j$; see \cite[Chap.~2]{LindMarcus1995}. Let $(q_\A, q_\B)$ be a state of
    $\A \times \B$ and consider all paths of length $n$ starting from this state. These paths can be identified with
    their edge labels. Given such a path with edge label $w$, we find by applying the projection mapping $\varphi$ a
    path in $\A$ with edge label $w$ starting at the state $q_\A$. Conversely, given a path of length $n$ in $\A$ with
    edge label $w$ starting at state $q_\A$, there exists a path with edge label $w$ in $\A \times \B$ starting at the
    state $(q_\A, q_\B)$ because the automaton $\B$ is complete (see \autoref{def:automatic}). Denoting the total number of paths of length $n$
    starting at a state $q$ of $\A \times \B$ by $K_q(n)$, we have thus argued that
    \begin{equation*}
       K_q(n) = \sum_{r \in Q_{\A \times \B}} (Adj(\A \times \B))^n_{q,r} = \sum_{s \in Q_{\A}} (Adj(\A))^n_{\varphi(q), s}.
    \end{equation*}
    The canonical automaton of a Parry numeration system is primitive (recall \autoref{rem:parry_prim}), we have for each $i$ and $j$ that $((Adj(\A))^n_{i,j} = \Theta(\alpha^n)$, where
    $\alpha$ is the Perron--Frobenius eigenvalue of $\A$. Thus $K_q(n) = \Theta(\alpha^n)$. By rephrasing the number
    $K_q(n)$ in terms of substitutions, we have $\abs{\sigma^n(q)} = K_q(n)$. Hence
    $\abs{\sigma^n(q)} = \Theta(\alpha^n)$, and we are done.
  \end{proof}

  Notice that in fact we showed in the proof of \autoref{thm:parry_sublinear} that for each Parry-automatic sequence
  $\infw{x}$ there exists a coding $\tau$ and a quasi-uniform substitution $\sigma$ such that
  $\infw{x} = \tau(\sigma^\omega(a))$ for a letter $a$. This should be contrasted with the fact that $k$-automatic
  sequences are codings of fixed points of \emph{uniform} substitutions.

From \autoref{lem:strict_subset}, there exist Bertrand numeration systems that are not Parry numeration systems. We show that \autoref{thm:parry_sublinear} does not generalize to Bertrand-automatic sequences.

  \begin{theorem}
    There exists a Bertrand-automatic sequence with superlinear factor complexity.
  \end{theorem}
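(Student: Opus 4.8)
The plan is to exhibit an explicit Bertrand numeration system $U$ that is not Parry, together with an explicit $U$-automatic sequence whose factor complexity grows faster than linearly. The natural candidate is the Bertrand system $B$ of \autoref{exa:nonParry}, whose numeration language is $\{0,1,2\}^*(\{\varepsilon\}\cup 3\,0^*)$. The key structural feature of this language—and the source of superlinearity—is that the canonical automaton accepting $0^*\rep_B(\N)$ (see \autoref{fig:aaab}) is \emph{not} primitive: it has two strongly connected components, namely the loop on the initial state $s$ reading letters $0,1,2$, and the separate loop on state $t$ reading only $0$. This is in sharp contrast with the Parry case, where by \autoref{rem:parry_prim} the canonical automaton is always primitive, and it was precisely primitivity that forced the associated substitution to be quasi-uniform and hence the complexity to be sublinear in \autoref{thm:parry_sublinear}.

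The approach I would take mirrors the proof of \autoref{thm:parry_sublinear} but exploits the failure of primitivity. First I would choose a simple output DFAO $\B$ generating a $B$-automatic sequence $\infw{x}$ and form the product/substitution picture as in that proof, obtaining $\infw{x}=\tau(\sigma^\omega(q_0))$ for a substitution $\sigma$ built from the transitions of the product automaton. Because the underlying automaton for $B$ has a reducible adjacency matrix—one component with spectral radius $3$ (the loop reading $0,1,2$) feeding into a component with spectral radius $1$ (the loop reading only $0$)—the substitution $\sigma$ will no longer be quasi-uniform: the images $\abs{\sigma^n(q)}$ will grow like $3^n$ for some letters but only polynomially (indeed, like $n$) for letters sitting in the $t$-component. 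This is exactly the regime of Pansiot's classification that produces complexity $\Theta(n\log n)$ rather than $\Theta(n)$.

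The main obstacle, and the real content of the proof, is to pin down the output function $\tau$ and verify that the resulting word genuinely has superlinear complexity rather than merely arising from a non-quasi-uniform substitution (a non-quasi-uniform substitution can still produce a word of linear complexity if the coding collapses too much). I would therefore either invoke Pansiot's theorem \cite{Pansiot1984} in the form that a fixed point of a substitution with the above two-scale growth and a suitably non-degenerate coding has complexity $\Theta(n\log n)$, or—more safely for an explicit example—directly estimate the factor complexity from below. For the direct route, the plan is to choose $\tau$ so that the letters read along the long $0$-tail through state $t$ encode, at position $n$, information that depends on the highest digit of $\rep_B(n)$; the words of length $n$ then fall into $\Theta(n)$ distinct "blocks" each contributing new factors, and a counting argument shows $p_{\infw{x}}(n)/n\to\infty$.

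To make this concrete I would let $\tau$ distinguish the state $t$ from the state $s$, so that $\infw{x}$ marks with a special letter exactly those integers $n$ whose greedy $B$-representation is of the form $3\,0^{\,k}$, i.e.\ $n=3B_k=B_{k+1}-1$. The positions of this marker are $\{B_{k+1}-1 : k\ge 0\}$, which grow geometrically, and between consecutive markers the word repeats a $3$-automatic-like pattern on an interval of length $\Theta(3^k)$; splicing together windows of length $n$ that straddle a marker yields, as $n$ ranges over an interval, a number of distinct factors growing like $n\log n$. The careful bookkeeping of how many genuinely new length-$n$ factors each marker contributes is where the estimate must be done honestly, and I expect this counting—showing the contributions do not coincide across scales—to be the technical heart of the argument.
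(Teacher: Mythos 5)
You have the right example (the Bertrand system of \autoref{exa:nonParry} together with the substitution built from its canonical automaton, which is exactly $\sigma\colon a\mapsto aaab$, $b\mapsto b$) and the right high-level tool (Pansiot's theorem, exploited through the failure of primitivity). However, the way you propose to apply Pansiot rests on a misreading of the growth structure. The letter $b$ corresponding to the state $t$ satisfies $\sigma(b)=b$, so $\abs{\sigma^n(b)}=1$ for all $n$: it is a \emph{bounded} letter, not one whose image grows ``like $n$''. This places you in a different case of Pansiot's classification from the one you invoke. The $\Theta(n\log n)$ class arises for \emph{growing} substitutions with exponentially divergent growth rates; here, because $b$ is bounded, the relevant dichotomy is whether the aperiodic fixed point contains infinitely many distinct bounded factors, in which case the complexity is $\Theta(n^2)$. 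That is precisely what happens: $\sigma^n(a)$ ends in $ab^n$ and is followed in $\sigma^{n+1}(a)=\sigma^n(a)^3b$ by the letter $a$, so $ab^na$ occurs for every $n$, giving aperiodicity and infinitely many bounded factors at once. This one observation is the paper's entire proof; your planned route would either cite the wrong case of the theorem or embark on a hand count built on an incorrect picture.

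That picture is also factually off in a second place: the state $t$ is reached by \emph{every} greedy representation containing the digit $3$, i.e., every word of the form $u3\,0^{j}$ with $u\in\{0,1,2\}^*$, not only by $3\,0^{k}$. Hence the set of marked positions is not $\{B_{k+1}-1 : k\ge 0\}$ and is not a sparse geometric set; on the contrary, arbitrarily long runs $b^n$ of consecutive marked positions occur, and that abundance is exactly what drives the complexity to $\Theta(n^2)$. The counting argument you defer to ``the technical heart'' would have to be rebuilt on this corrected description --- at which point the occurrence of $ab^na$ for all $n$ makes the direct appeal to Pansiot's quadratic case both available and far shorter than any bespoke estimate.
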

  \begin{proof}
    Consider the numeration system given by the recurrence $B_n = 3 B_{n-1} + 1$ with
    $B_0 = 1$. In \autoref{exa:nonParry}, it was shown that this numeration system is a Bertrand
    numeration system.

    The substitution associated with the canonical automaton, depicted in \autoref{fig:aaab},
    is $\sigma\colon a \mapsto aaab$, $b \mapsto b$; see the proof of \autoref{thm:parry_sublinear} how this substitution is
    defined. Let $\infw{x}$ be the infinite fixed point of $\sigma$. Observe that $\infw{x}$ is
    a Bertrand-automatic sequence. It is easy to see that $ab^n a$ occurs in $\infw{x}$ for all $n \geq 0$. Thus $\infw{x}$
    is aperiodic and there exists infinitely many bounded factors occurring in $\infw{x}$ (a factor is $w$
    \emph{bounded} if the sequence $(\abs{\sigma^n(w)})_{n\ge 0}$ is bounded). It follows by Pansiot's theorem \cite{Pansiot1984}
    that the factor complexity of $\infw{x}$ is quadratic; see also \cite[Thm.~4.7.66]{cant}.
  \end{proof}

  We do not have examples of Bertrand-automatic sequences with factor complexities $\Theta(n \log n)$ or
  $\Theta(n \log \log n)$.

\section{Closure Properties}\label{sec:closure}

It is easy to see that the image of a $k$-automatic sequence $\infw{x}\in A^\omega$ under a substitution $\mu\colon A\to B^*$ of constant length $\ell$ is again a $k$-automatic sequence. Indeed, \autoref{the:Uautomatic} implies that for all $a\in A$ there exists a first-order formula $\varphi_a(n)$ in $\langle \N,+,V_k\rangle$ which holds if and only if $\infw{x}[n]=a$. Let us then define for each $b \in B$ a formula $\psi_b(n)$ that holds if and only if $\mu(\infw{x})[n]=b$. For each $n$ there exist unique $q$ and $r$ such that $0 \leq r < \ell$ and $n = \ell q + r$. For each $a \in A$, we can construct a formula $\sigma_a(r)$ that holds if and only if $\mu(a)$ contains the letter $b$ at position $r$ (indexing from $0$). Setting
\begin{equation*}
  \psi_b(n) = (\exists q)(\exists r < \ell)(n = \ell q + r \wedge \bigvee_{a \in A}(\varphi_a(q) \wedge \sigma_a(r)))
\end{equation*}
certainly has the desired effect. Notice that this is indeed a formula in $\langle \N, +, V_k \rangle$ since $\ell$ is constant. Therefore it follows from \autoref{the:Uautomatic} that $\mu(\infw{x})$ is $k$-automatic.

\begin{Example}
    Assume $A=\{a,b\}$, $B=\{c,d\}$, $\ell=3$ and set $\mu(a)=ccd$, $\mu(b)=dcd$. In this case, the formula $\psi_c(n)$ is given by
    \begin{equation*}
      (\exists q)(\exists r<3)(n=3q+r\wedge [(\varphi_a(q)\wedge(r=0\vee r=1))\vee (\varphi_b(q)\wedge r=1)]).
    \end{equation*}
\end{Example}
The same construction can be applied to numeration systems canonically associated with a Pisot number \cite{BH}. Here, we show that this closure property does not hold for Parry-automatic sequences.

\begin{theorem}\label{thm:parry_not_closed}
  There exists a Parry numeration system $U$ such that the class of $U$-automatic sequences is not closed under taking
  image by a uniform substitution.
\end{theorem}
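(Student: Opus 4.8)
The plan is to reduce the statement to a failure of recognizability under multiplication by a constant, and then to realize that failure on a concrete Parry numeration system that is not Pisot. First I would record the following reduction, valid for any numeration system $U$ with $\rep_U(\N)$ regular. Let $T\subseteq\N$ be $U$-recognizable and let $\ell\ge 2$. Take $\infw{x}=(x_n)_{n\ge 0}$ to be the characteristic sequence of $T$ over $\{0,1\}$, so that $\{n : x_n = 1\} = T$; then $\infw{x}$ is $U$-automatic by \autoref{the:Uautomatic}. Let $\mu$ be the uniform substitution of length $\ell$ defined by $\mu(1)=1\,0^{\ell-1}$ and $\mu(0)=0^\ell$, and put $\infw{y}=\mu(\infw{x})$. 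A letter $1$ occurs in $\infw{y}$ exactly at the indices $\ell n$ with $x_n=1$, so $\{n\ge 0 : \infw{y}[n]=1\}=\ell\, T$. Hence, again by \autoref{the:Uautomatic}, if $\infw{y}$ were $U$-automatic then $\ell\, T$ would be $U$-recognizable. It therefore suffices to exhibit a Parry numeration system $U$, a $U$-recognizable set $T$, and an integer $\ell$ for which $\ell\, T$ is \emph{not} $U$-recognizable. (Equivalently, one may phrase the same construction through \autoref{prp:kernel_1D} and show directly that $\ker_U(\infw{y})$ is infinite.)

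Next I would explain why such data exist and where the Pisot/non-Pisot dichotomy enters. For a Pisot numeration system the argument recalled just before \autoref{thm:parry_not_closed} goes through verbatim: multiplication by a fixed constant preserves $U$-recognizability, because addition and hence normalization are computable by a finite automaton \cite{BH}. The obstruction I exploit is precisely that this breaks for Parry numbers that are not Pisot, where normalization is in general not realizable by a finite automaton (cf.\ \cite{Frougny1997}). I would therefore fix $U$ to be the Parry but non-Pisot system \eqref{eq:Un}, and choose $T$ as the value set of a simple regular language of representations, so that $T$ is $U$-recognizable by construction while the greedy representations of the elements of $\ell\, T$ are produced only after normalization, i.e.\ after propagating carries governed by the linear recurrence defining $U$.

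The crux, and the step I expect to be the main obstacle, is to prove that $\ell\, T$ is not $U$-recognizable. I would argue by contradiction with the pumping lemma: assuming $\rep_U(\ell\, T)$ is accepted by a DFA with $N$ states, I would construct an infinite family of words distinguishing membership in $\rep_U(\ell\, T)$ from membership in its complement, using the linear recurrence satisfied by $(U_n)$ together with the shape of the canonical Parry automaton. The key quantitative input is that $U$ is not Pisot: the presence of a conjugate of $\beta$ of modulus at least $1$ forces the carries generated when normalizing $\ell\cdot\val_U(\cdot)$ to propagate across unboundedly long blocks of digits, which no finite automaton can track, yielding the contradiction. Making this precise for the specific system \eqref{eq:Un} — pinning down $T$, $\ell$, and the exact pumped family — is where essentially all the technical effort lies. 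As a consistency check, note that $\infw{y}=\mu(\infw{x})$ still has sublinear factor complexity, since a uniform substitution cannot raise sublinear complexity above linear; thus by \autoref{thm:parry_sublinear} the failure of $U$-automaticity cannot be seen through factor complexity and must be arithmetic in nature, exactly as above.
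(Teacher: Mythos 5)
Your reduction is exactly the one the paper uses: take the characteristic sequence $\infw{x}$ of a $U$-recognizable set $T$, apply $\mu(1)=10^{\ell-1}$, $\mu(0)=0^{\ell}$, observe that $\mu(\infw{x})$ is the characteristic sequence of $\ell T$, and invoke \autoref{the:Uautomatic}. Your choice of the numeration system \eqref{eq:Un} and your diagnosis that non-Pisotness (a conjugate $\gamma$ with $\abs{\gamma}>1$) is the source of the obstruction are also both correct. However, the step you yourself flag as ``where essentially all the technical effort lies'' is in fact the entire content of the theorem, and your sketch of it does not constitute a proof. Saying that carries ``propagate across unboundedly long blocks of digits, which no finite automaton can track'' is a heuristic, not an argument: plenty of operations involve unbounded carry propagation yet preserve regularity (ordinary base-$k$ multiplication by a constant being the obvious example), so you need a concrete invariant that regularity would force and that provably fails.

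The paper's mechanism is different from the one you gesture at, and it is worth knowing what it is. Take $T=\{U_n\mid n\ge 0\}$ and $\ell=t\ge 4$ (for $t=2,3$ one has $\rep_U(tU_n)=t0^n$, so the set is trivially recognizable --- your $T$ and $\ell$ must avoid such degeneracies). The key observation (\autoref{cor:notU}) is that for $\ceil{\beta^{r-1}}\le t\le\floor{\beta^r}$ the greedy representations $\rep_U(tU_n)$ share longer and longer prefixes with the $\beta$-expansion $d_\beta(t/\beta^r)$; consequently, if $\rep_U(\{tU_n\})$ were regular, a pumping argument would force $d_\beta(t/\beta^r)$ to be ultimately periodic. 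The real work (\autoref{pro:aper}) is then to prove that $d_\beta(t/\beta^r)$ is aperiodic, and this is where the conjugate enters: by Schmidt's lemma (\autoref{lem:schmidt}), ultimate periodicity of $d_\beta(x)$ for $x\in\Q(\beta)$ would force the Galois-conjugate identity $t/\gamma^r=\sum_{i\ge 1}d_i\gamma^{-i}$, and because $\abs{\gamma}\approx 1.097>1$ the right-hand side is too small to equal $t/\gamma^r$ for $t\ge\ceil{\beta^{r-1}}$, up to a finite number of cases handled by explicit digit computations. None of this appears in your proposal, and your pumping family is never constructed. So the skeleton is right but the theorem remains unproved as written.
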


  Throughout this section, we shall consider a specific numeration system $U$ given by the recurrence
  \begin{equation}\label{eq:Un}
    U_n = 3U_{n-1} + 2U_{n-2} + 3U_{n-4}
  \end{equation}
  with initial values $U_0 = 1$, $U_1 = 4$, $U_2 = 15$, and $U_3 = 54$ (it is from \cite[Example~3]{Frougny1997}). The
  characteristic polynomial has two real roots $\beta$ and $\gamma$ and two complex roots with modulus less than $1$.
  We have $\beta \approx 3.61645$ and $\gamma \approx -1.09685$. Thus from the basic theory of linear recurrences, we
  obtain that $U_n \sim c\beta^n$ for some constant $c$. The characteristic polynomial of the recurrence is the minimal
  polynomial of $\beta$ so, in particular, $\gamma$ is a conjugate of $\beta$. Since $\abs{\gamma} > 1$, the number
  $\beta$ is neither a Pisot number nor a Salem number. It is however a Parry number, as it is readily checked that
  $d_\beta(1) = 3203$. Thus $U$ is a Parry numeration system. Moreover, we have $U = U_\beta$. Recall that $\rep_U(\N)$
  is regular as this holds for all Parry numeration systems.

  Consider the characteristic sequence $\infw{x}$ of the set $\{U_n\mid n\ge 0\}$:
  \begin{equation*}
    \infw{x} = 0100100000000001000000000000 \cdots.
  \end{equation*}
From \autoref{the:Uautomatic}, this sequence is $U$-automatic. We consider the constant length substitution $\mu\colon 0\mapsto 0^t, 1\mapsto 10^{t-1}$ with $t \geq 4$. Observe that $\mu(\infw{x})$ is the characteristic sequence of $\{tU_n\mid n\ge 0\}$. The multiplier $4$ is the first interesting value to consider because for $j=2,3$ we have $\rep_U(\{jU_n\mid n\ge 0\})=j0^*$, and we trivially get $U$-recognizable sets. Our aim is to show that $\mu(\infw{x})$ is not $U$-automatic (see \autoref{cor:notU}). This proves \autoref{thm:parry_not_closed}.

  We begin with an auxiliary result that is of independent interest.

  \begin{proposition}\label{pro:aper}
    Let $r \geq 2$ an integer. If $t$ is an integer such that $4 \leq t \leq \floor{\beta^r}$, then the
    $\beta$-expansion of the number $t/\beta^r$ is aperiodic.
  \end{proposition}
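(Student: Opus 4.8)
The plan is to work with the orbit of $t/\beta^r$ under the $\beta$-transformation $T_\beta(y)=\beta y-\floor{\beta y}$ and to exploit the real conjugate $\gamma$ of $\beta$, which satisfies $\abs{\gamma}>1$ (this is exactly where the non-Pisot, non-Salem nature of $\beta$ enters). Writing $d_\beta(t/\beta^r)=(x_i)_{i\ge 1}$, recall that the expansion is eventually periodic if and only if the orbit $\{T_\beta^n(t/\beta^r)\mid n\ge 0\}$ is finite. First I would record a shift-invariance: since prepending a $0$ to an admissible word keeps it admissible, $d_\beta(t/\beta^{r+1})=0\,d_\beta(t/\beta^r)$, so the tail, and hence the eventual periodicity, depends only on $t$ and not on $r$. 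Thus it suffices to analyse the greedy $\beta$-expansion $d_Ld_{L-1}\cdots$ of the integer $t$, where $\beta^L\le t<\beta^{L+1}$ and each digit lies in $\{0,\ldots,\ceil{\beta}-1\}=\{0,1,2,3\}$, with $d_L\ge 1$ and $L\ge 1$ because $t\ge 4>\beta$.

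Next I would set up the conjugate estimate. Let $\phi$ be the embedding of $\Q(\beta)$ sending $\beta\mapsto\gamma$ (and fixing $t\in\Q$). From the standard identity $o_n:=T_\beta^n(t/\beta^r)=\beta^n(t/\beta^r)-\sum_{j=1}^n x_j\beta^{n-j}$ one obtains, after applying $\phi$,
\[
  \phi(o_n)=\gamma^n\Bigl(\tfrac{t}{\gamma^r}-\sum_{j=1}^n x_j\gamma^{-j}\Bigr).
\]
Suppose, for contradiction, that the expansion is eventually periodic. Then the orbit is finite, so $(\phi(o_n))_n$ takes only finitely many values and is in particular bounded; since $\abs{\gamma}>1$, the displayed identity forces the bracket to converge to $0$, that is $t/\gamma^r=\sum_{j\ge 1}x_j\gamma^{-j}$. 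Unwinding the shift-invariance (multiplying by $\gamma^r$ and matching the digits $x_j$ of $t/\beta^r$ with the digits $d_k$ of $t$), this says precisely that re-evaluating the greedy digit string of $t$ at the conjugate $\gamma$ reproduces $t$, namely $t=\sum_{k\le L}d_k\gamma^k$.

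The contradiction then comes from a size estimate. Because $\abs{\gamma}>1$, summing the geometric series gives
\[
  t=\Abs{\sum_{k\le L}d_k\gamma^k}\le 3\sum_{k\le L}\abs{\gamma}^k=\frac{3\abs{\gamma}^{L+1}}{\abs{\gamma}-1},
\]
while by definition $t\ge\beta^L$. These two bounds are incompatible as soon as $(\beta/\abs{\gamma})^L>3\abs{\gamma}/(\abs{\gamma}-1)$; since $\beta/\abs{\gamma}\approx 3.30$ and the right-hand side is $\approx 34$, this holds for every $L\ge 3$. Hence $d_\beta(t/\beta^r)$ is aperiodic whenever $\floor{\beta^3}<t\le\floor{\beta^r}$, which already settles the bulk of the range.

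The remaining, and genuinely delicate, case is the finite interval $4\le t\le\floor{\beta^3}$, where $L\in\{1,2\}$. I expect this to be the \emph{main obstacle}: the crude bound fails here precisely because $\abs{\gamma}$ is so close to $1$ that the tail factor $1/(\abs{\gamma}-1)$ is large, so the leading term $d_L\gamma^L$ no longer dominates. The plan for it is to rule out the equality $t=\sum_{k\le L}d_k\gamma^k$ for each of these finitely many integers directly: one computes the greedy $\beta$-expansion of $t$ to sufficient precision, using the Parry admissibility condition (every suffix being lexicographically smaller than $d_\beta^*(1)=(3202)^\omega$) to control the tail digits, and checks that its $\gamma$-value differs from the integer $t$. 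A convenient sharpening, exploiting the alternating signs of $\gamma^k$ since $\gamma<0$, is to estimate $\sum_{k\le L}d_k(\beta^k-\gamma^k)$ instead and show it is bounded away from $0$; the partial cancellation in the negative-index tail then leaves enough room to close each of the small cases.
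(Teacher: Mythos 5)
Your strategy is the same as the paper's: embed $\Q(\beta)\to\Q(\gamma)$, use ultimate periodicity to derive the conjugate identity $t/\gamma^r=\sum_{i\ge1}d_i\gamma^{-i}$ (the paper quotes this as an adaptation of Schmidt's lemma; your orbit-finiteness derivation is a clean self-contained substitute), and then contradict it with size estimates, leaving a finite range to be checked by explicit digit computations. Your reduction of the bulk of the range is valid: the inequality $(\beta/\abs{\gamma})^L>3\abs{\gamma}/(\abs{\gamma}-1)$ does hold for $L\ge3$, though only barely ($\approx 35.8$ versus $\approx 34.0$), so you should flag that this needs certified bounds on $\beta$ and $\gamma$ rather than bare approximations. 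The paper reaches the same reduction (to $r\le 3$, i.e.\ the same residual set $4\le t\le 47$) with a bit more slack by exploiting $\gamma<0$: since the odd-indexed terms of $\sum d_i\gamma^{-i+r}$ are negative, they can be discarded and the tail bounded by a geometric series in $\gamma^{-2}$, giving $S_{r+1,\infty}<15$ rather than the cruder absolute-value bound.

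The genuine gap is that the range $4\le t\le 47$ is announced as ``the main obstacle'' and then only sketched, and this is precisely the range the paper needs: \autoref{thm:parry_not_closed} is proved via $t=4$. Your plan for closing it is the right one, but be aware of how much computation it actually takes. The paper shows that for $r=3$ (your $L=2$, $14\le t\le 47$) three digits suffice: enumerating $d_1d_2d_3$ over all such $t$, the quantity $t-S_{1,3}$ is minimized at $t=14$ with value $>12.79$, while the tail satisfies $S_{4,\infty}<12.28$. For $r=2$ (your $L=1$, $4\le t\le 13$) three digits are \emph{not} enough: one needs twelve, e.g.\ $d_1\cdots d_{12}=101111202300$ for $t=4$, giving $t-S_{1,12}>5.38$ against a tail bound $S_{13,\infty}<5$, and this in turn requires knowing $\beta$ to about eleven decimal places. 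So ``compute to sufficient precision and check'' is correct in principle, but the margin is a fraction of a unit and the precision requirements are concrete obligations, not routine afterthoughts; without carrying them out the proposition is unproved exactly where it is used.
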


  For the proof, we need the following technical lemma, which is obtained by adapting \cite[Lemma~2.2]{Schmidt1980} to our situation. Since $\beta$ is an algebraic number of degree $4$, it is well-known that every element in $\Q(\beta)$ can be expressed as a polynomial in $\beta$ of degree at most $3$ with coefficients in $\Q$.

  \begin{lemma}\label{lem:schmidt}
    Let $x \in [0, 1) \cap \Q(\beta)$, and write
    \begin{equation*}
      x = q^{-1} \sum_{i = 0}^3 p_i \beta^i
    \end{equation*}
    for integers $q$ and $p_i$. If $d_\beta(x)$ is ultimately periodic, then
    \begin{equation*}
      q^{-1} \sum_{i = 0}^3 p_i \gamma^i = \sum_{i = 1}^\infty d_\beta(x)[i] \gamma^{-i}.
    \end{equation*}
  \end{lemma}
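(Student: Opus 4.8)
The plan is to exploit that $\beta$ and $\gamma$ are algebraic conjugates: since the minimal polynomial $m$ of $\beta$ (of degree $4$) also annihilates $\gamma$, there is a $\Q$-algebra embedding $\iota\colon\Q(\beta)\to\mathbb{C}$ fixing $\Q$ and sending $\beta\mapsto\gamma$. Under $\iota$, the left-hand side $q^{-1}\sum_{i=0}^3 p_i\gamma^i$ is exactly $\iota(x)$, because $\iota$ is a ring homomorphism applied to $x=q^{-1}\sum_{i=0}^3 p_i\beta^i$. The task then reduces to computing $\iota(x)$ from the ultimately periodic expansion and recognizing the answer as the series $\sum_{i\ge1}d_\beta(x)[i]\,\gamma^{-i}$.

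First I would write $d_\beta(x)=(x_i)_{i\ge1}$ with a preperiod of length $s\ge0$ and a period of length $p\ge1$, so that $x_{i+p}=x_i$ for all $i>s$. Summing the geometric tail gives the closed form
\[
  x=\sum_{i=1}^s x_i\beta^{-i}+\frac{\beta^{-s}\sum_{k=1}^{p}x_{s+k}\beta^{-k}}{1-\beta^{-p}},
\]
which is valid because $\beta>1$ forces $\abs{\beta^{-p}}<1$. This exhibits $x$ as $R(\beta)$ for an explicit rational function $R(T)=N(T)/D(T)$ with rational coefficients and denominator $D(T)=T^{s}(T^{p}-1)$.

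Next I would transport this identity to $\gamma$. Since in the field $\Q(\beta)$ we have $R(\beta)=q^{-1}\sum_{i=0}^3 p_i\beta^i$, clearing denominators shows that the polynomial $qN(T)-D(T)\sum_{i=0}^3 p_iT^i$ vanishes at $T=\beta$, hence is divisible by $m$; evaluating at the conjugate root $T=\gamma$ yields $qN(\gamma)=D(\gamma)\sum_{i=0}^3 p_i\gamma^i$. The hypothesis $\abs{\gamma}>1$ enters here and is the crux of the argument, in two ways. It guarantees $D(\gamma)=\gamma^{s}(\gamma^{p}-1)\neq0$ (as $\gamma\neq0$ and $\abs{\gamma^p}>1\neq1$), so that we may divide and obtain $R(\gamma)=q^{-1}\sum_{i=0}^3 p_i\gamma^i$; and it guarantees $\abs{\gamma^{-p}}<1$, so that the geometric series $\sum_{j\ge0}\gamma^{-jp}=1/(1-\gamma^{-p})$ converges.

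Finally I would expand $R(\gamma)$ using this convergent series and re-index the resulting double sum, mirroring exactly the computation that produced the closed form for $x$: with $i=s+jp+k$ and periodicity $x_{s+k}=x_i$, the tail collapses to $\sum_{i>s}x_i\gamma^{-i}$, so $R(\gamma)=\sum_{i\ge1}x_i\gamma^{-i}$, which is the desired right-hand side. This rearrangement is legitimate because the digits $x_i\le\ceil{\beta}-1$ are bounded while $\abs{\gamma^{-i}}\to0$ geometrically, so the series converges absolutely. The main obstacle is thus entirely the convergence and well-definedness issue controlled by $\abs{\gamma}>1$; note that for a Pisot $\beta$, where $\abs{\gamma}<1$, the series on the right would diverge, which is precisely why the statement is meaningful only in the present non-Pisot setting. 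Everything else is the routine fact that a $\Q$-algebra embedding commutes with the rational-function manipulations above.
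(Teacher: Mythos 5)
Your proof is correct and is exactly the ``adaptation of Schmidt's Lemma~2.2'' that the paper invokes without writing out: apply the conjugation $\beta\mapsto\gamma$ to the closed rational form that ultimate periodicity gives for $x$, check $\gamma^{s}(\gamma^{p}-1)\neq 0$, and re-expand at $\gamma$ using $\abs{\gamma}>1$ for convergence. The paper itself supplies no proof beyond the citation, so nothing further is needed; your remark that the statement would fail for a Pisot $\beta$ (where the right-hand series diverges) correctly identifies why $\abs{\gamma}>1$ is the essential hypothesis.
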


  \begin{proof}[Proof of \autoref{pro:aper}]
    Let us first make an additional assumption that $t \geq \ceil{\beta^{r-1}}$ and prove the result in this case. Set
    $x = t/\beta^r =q^{-1} \sum_{i = 0}^3 p_i \beta^i$, and assume for a contradiction that $d_\beta(x)$ is ultimately periodic. Write
    $d_\beta(x) = d_1 d_2 \cdots$. Since $\beta$ and $\gamma$ are conjugates, 
$$\frac{t}{\gamma^r} = q^{-1} \sum_{i = 0}^3 p_i \gamma^i$$
and it follows from \autoref{lem:schmidt}
    that
    \begin{equation*}
      \frac{t}{\gamma^r} = \sum_{i = 1}^\infty d_i \gamma^{-i}.
    \end{equation*}
    In other words, for any positive integer $k$, we have
    \begin{equation}\label{eq:p1}
      t = \sum_{i = 1}^\infty d_i \gamma^{-i+r} = S_{1,k} + S_{k+1,\infty},
    \end{equation}
    where $S_{m,n} := \sum_{i = m}^n d_i \gamma^{-i+r}$. Since $\gamma$ is negative and $d_i \leq 3$ for all $i \geq 1$,
    we have
    \begin{equation}\label{eq:p2}
      S_{r+1,\infty} = \sum_{i = 1}^\infty d_{i+r} \gamma^{-i} \leq 3 \sum_{i = 1}^\infty \gamma^{-2i} = \frac{3\gamma^{-2}}{1-\gamma^{-2}} < 15.
    \end{equation}
    Similarly by discarding the odd terms and estimating $d_i \leq 3$, we
    obtain
    \begin{equation}\label{eq:p3}
      S_{1,r} = \sum_{i = 0}^{r-1} d_{r-i} \gamma^i \leq \frac{3(1 - \gamma^{2(k+1)})}{1 - \gamma^2},
    \end{equation}
    where $k$ is the largest integer such that $2k \leq r - 1$. Combining \eqref{eq:p1}, \eqref{eq:p2}, and
    \eqref{eq:p3} with our assumption $t \geq \ceil{\beta^{r-1}}$, we obtain that
    \begin{equation}\label{eq:p4}
      \beta^{r-1} < \frac{3(1 - \gamma^{2(k+1)})}{1 - \gamma^2} + 15
    \end{equation}
    The left side of \eqref{eq:p4} clearly increases faster than the right side when $r \to \infty$ since
    $\beta \approx 3.62$ and $\gamma^2 \approx 1.20$. Using these approximations, it is straightforward to compute that
    for $r = 4$ the left side of \eqref{eq:p4} is approximately $47$ while the right side is only approximately $22$.
    Hence it must be that $r \leq 3$.

    We are thus left with a few cases we have to deal with separately. The idea is the same, but we need to
    actually compute some digits $d_i$. Suppose first that $r = 3$. Like previously, we see that
    $S_{4,\infty} \leq 3\gamma^{-4}/(1-\gamma^{-2}) < 12.28$. Since $14=\lceil\beta^2\rceil\le t\le \lfloor\beta^3\rfloor=47$, by enumerating all possibilities for the word
    $d_1 d_2 d_3$ (within the given range for $t$), we see that $f(t)=t - S_{1,3}$ is minimized when
    $t = \ceil{\beta^2} = 14$. 

    \begin{figure}[h!tb]
        \centering
        \begin{minipage}{.28\linewidth}
            $$
            \begin{array}{c|c|c}
              t& d_\beta(t/\beta^3) & t-S_{1,3} \\
\hline
14 & 100\cdots & 12.797 \\
 15 & 101\cdots & 12.797 \\
 16 & 102\cdots & 12.797 \\
 17 & 110\cdots & 16.894 \\
 18 & 111\cdots & 16.894 \\
\vdots & \vdots & \vdots \\
 44 & 311\cdots & 40.488 \\
 45 & 312\cdots & 40.488 \\
 46 & 313\cdots & 40.488 \\
 47 & 320\cdots & 45.584 \\
            \end{array}$$
        \end{minipage}
        \begin{minipage}{.58\linewidth}
            \begin{center}
                \scalebox{.8}{\includegraphics{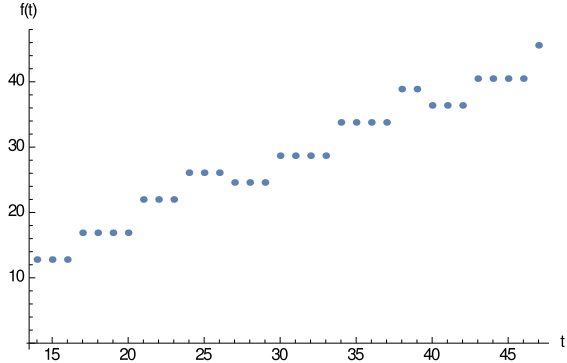}}
            \end{center}
        \end{minipage}
        \caption{Values of $t-S_{1,3}$}
        \label{fig:S13}
    \end{figure}

In this case, $d_1 d_2 d_3 = 100$ and $t - S_{1,3} > 12.79$. This contradicts
    \eqref{eq:p1}. Suppose then that $r = 2$. We proceed as above, but now we are interested in the number
    $t - S_{1,12}$ instead. By enumerating all possibilities, we see that $t - S_{1,12}$ is minimized for
    $t = \ceil{\beta} = 4$. Then $d_1 \cdots d_{12} = 101111202300$ and $t - S_{1,12} > 5.38$. Since
    $S_{13,\infty} < 5$, we get a contradiction.

    Suppose finally that $4 \leq t < \ceil{\beta^{r-1}}$. For $r = 2$, we have already proven the claim because
    $\ceil{\beta} = 4$, so we may suppose that $r > 2$. As $t < \beta^{r-1}$, we see that $\beta \cdot t/\beta^r < 1$
    meaning that $d_1 = 0$. Thus $d_2 d_3 \cdots$ is the $\beta$-expansion of $t/\beta^{r-1}$. Inductively it follows
    that this expansion is aperiodic, and we are done.
  \end{proof}

  We now have tools to prove that the set $\{t U_n \mid n \geq 0\}$ is not $U$-recognizable for $t \geq 4$.

  \begin{corollary}\label{cor:notU}
    The set $\{t U_n\mid n\ge 0\}$ is not $U$-recognizable for $t \geq 4$. In other words, its characteristic sequence
    $\mu(\infw{x})$ is not $U$-automatic.
  \end{corollary}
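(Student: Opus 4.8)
The plan is to prove the equivalent statement that $\{tU_n \mid n \ge 0\}$ is not $U$-recognizable; by \autoref{the:Uautomatic} this is the same as saying that $\mu(\infw{x})$ is not $U$-automatic, since the latter is the characteristic sequence of this set. Fix the least $r \ge 2$ with $t \le \floor{\beta^r}$ and set $\infw{w} := d_\beta(t/\beta^r)$, a right-infinite word over $A_\beta$; by \autoref{pro:aper} it is aperiodic, hence not ultimately periodic. The whole argument aims to show that if $\{tU_n\}$ were $U$-recognizable, then $\infw{w}$ would be forced to be ultimately periodic, a contradiction.

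First I would establish the geometric heart of the proof: the greedy representations $\rep_U(tU_n)$ converge, most-significant digit first, to $\infw{w}$, in the sense that the length of the longest common prefix of $\rep_U(tU_n)$ and $\infw{w}$ tends to infinity with $n$. The mechanism is as follows. Writing $\rep_U(tU_n) = w_{\ell-1}\cdots w_0$ with $\ell = \ell_n$, Bertrand's theorem \eqref{eq:berthm} guarantees that this word lies in $\Fact(D_\beta)$, so by \autoref{the:parry} it is an initial block of the $\beta$-expansion of the real number $\rho_n := \sum_{i=0}^{\ell-1} w_i \beta^{i-\ell} \in [0,1)$. Using $U_n \sim c\beta^n$ together with the fact that every root of the characteristic polynomial of \eqref{eq:Un} other than $\beta$ has modulus strictly smaller than $\beta$ --- in particular $\abs{\gamma} < \beta$ --- one checks that $\val_U$ of a length-$\ell_n$ word and the base-$\beta$ reading above differ only by the factor $c$ up to a relative error that is $o(1)$, whence $\rho_n \to t/\beta^r$. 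Since $\infw{w}$ is aperiodic, $t/\beta^r$ is never the left endpoint of a length-$k$ cylinder of the $\beta$-expansion and so lies in its interior; therefore the length-$k$ prefix of $d_\beta(\rho_n)$ stabilizes to the length-$k$ prefix of $\infw{w}$ for every fixed $k$, which is exactly the claimed convergence.

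Assuming now for contradiction that $L := \rep_U(\{tU_n\})$ is regular, I would feed $\infw{w}$ into its minimal DFA, reading most-significant digit first. By the convergence just established, every prefix of $\infw{w}$ is a prefix of some word of $L$, so each state reached is live, and by finiteness two of them coincide, say after reading the prefixes of lengths $a$ and $b$ with $a < b$. The finishing move is a splicing argument exploiting that every word of $L$ is the unique greedy representation of its value: for $n$ large the agreement length exceeds $b$, so $\rep_U(tU_n)$ factors as (length-$b$ prefix of $\infw{w}$)$\,z_n \in L$, and equality of the two states forces (length-$a$ prefix of $\infw{w}$)$\,z_n \in L$ as well. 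Reading this latter word, itself a greedy representation, against $\infw{w}$ over a window that grows with $n$ then yields that $\infw{w}$ coincides with its shift by $b-a$ from position $a$ onward, i.e., $\infw{w}$ is ultimately periodic. This contradicts \autoref{pro:aper}, so $\{tU_n\}$ is not $U$-recognizable.

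The hard part is the convergence lemma of the second paragraph. Because $\beta$ is not a Pisot number, the conjugate $\gamma$ has modulus larger than $1$ and its contribution does not decay; it is precisely this term that makes $\rep_U(tU_n)$ drift away from $\infw{w}$ in its low-order digits. The delicate point is thus to show that this drift is confined to a suffix whose relative length vanishes, so that the common \emph{prefix} still grows without bound. Controlling this --- rather than the subsequent automaton-theoretic splicing, which is routine once the convergence is in hand --- is where the real work lies.
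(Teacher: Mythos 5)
Your proposal is correct and follows essentially the same route as the paper: show that $\rep_U(tU_n)$ and $d_\beta(t/\beta^r)$ share prefixes of unbounded length, then use finiteness of states of a hypothetical recognizing automaton to force $d_\beta(t/\beta^r)$ to be ultimately periodic, contradicting \autoref{pro:aper}. The differences are cosmetic --- the paper gets the prefix convergence directly from the greedy bound $0 \le tU_n - d_1U_{n+r-1} - \cdots - d_kU_{n+r-k} < U_{n+r-k}$ divided by $U_{n+r}$ rather than via the reals $\rho_n$ and continuity of $d_\beta$, and it invokes the pumping lemma where you repeat a state along $d_\beta(t/\beta^r)$ --- and both variants are sound (only your closing remark that the drift occupies a suffix of \emph{vanishing} relative length is inessential and in fact slightly too strong; a positive constant fraction of trailing digits may disagree, but the common prefix still grows without bound, which is all that is needed).
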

  \begin{proof}
    Let $t \geq 4$, and suppose that $\ceil{\beta^{r-1}} \leq t \leq \floor{\beta^r}$ for some $r \geq 2$. Recall that
    $U_n \sim c\beta^n$ for some positive constant $c$. By some simple asymptotic analysis, it is easy to
    see that $U_{n+r-1} < tU_n < U_{n+r}$ for $n$ large enough. Hence, for $n$ large enough, $\rep_U(tU_n)$ is a word
    of length $n+r$ (starting with a nonzero digit). Let $k > 0$. We show that, for large enough $n$, $\rep_U(tU_n)$
    and $d_\beta(t/\beta^r)$ have the same prefix of length $k$. See \autoref{tab:4un} for an example. Assume that
    \begin{equation*}
      \rep_U(tU_n)=d_1\cdots d_k d_{k+1}\cdots d_{n+r}.
    \end{equation*}
    The extremal values for $d_{k+1} \cdots d_{n+r}$ are $0^{n+r-k}$ and (possibly) $\rep_U(U_{n+r-k}-1)$ due to the greediness of the representations. Hence
    \begin{equation*}
      0 \leq tU_n - d_1U_{n+r-1} - \cdots - d_kU_{n+r-k} < U_{n+r-k}.
    \end{equation*}
    Dividing by $U_{n+r}$ and by letting $n$ tend to infinity, we get
    \begin{equation*}
      0 \leq \frac{t}{\beta^r} - \frac{d_1}{\beta} - \cdots - \frac{d_k}{\beta^k} <\frac{1}{\beta^k}.
    \end{equation*}
    Otherwise stated, the first $k$ digits of $d_\beta(t/\beta^r)$ are $d_1 \cdots d_k$.
  
    Now proceed by contradiction and assume that $\rep_U(\{tU_n \mid n \geq 0\})$ is accepted by a finite deterministic
    automaton. By a classical pumping argument, there exist $x, y, z \in A_U^*$, with $y$ nonempty, such that $xy^jz$
    is accepted by this automaton for all $j \geq 0$. Hence, $d_\beta(t/\beta^r)$ should be of the form $xy^\omega$
    contradicting \autoref{pro:aper}.
  \end{proof}


  \begin{table}
      $$
\begin{array}{c|ccccccccccc}
1 &1 & 0 &  &  &  &  &  &  &  &  &  \\
2 &1 & 0 & 1 &  &  &  &  &  &  &  &  \\
3 &1 & 0 & 1 & 1 &  &  &  &  &  &  &  \\
4 &1 & 0 & 1 & 1 & 1 &  &  &  &  &  &  \\
5 &1 & 0 & 1 & 1 & 1 & 1 &  &  &  &  &  \\
6 &1 & 0 & 1 & 1 & 1 & 1 & 2 &  &  &  &  \\
7 &1 & 0 & 1 & 1 & 1 & 1 & 2 & 0 &  &  &  \\
8 &1 & 0 & 1 & 1 & 1 & 1 & 2 & 0 & 3 &  &  \\
9 &1 & 0 & 1 & 1 & 1 & 1 & 2 & 0 & 2 & 3 &  \\
10 &1 & 0 & 1 & 1 & 1 & 1 & 2 & 0 & 2 & 3 & 0 \\
\end{array}$$
\caption{Representations of the first $4U_n$.}
      \label{tab:4un}
  \end{table}

  \autoref{cor:notU} is interesting because it shows that addition in $U$ is not computable by a finite automaton, i.e., its graph is not a regular language. Indeed, if this was the case, then surely multiplication by any constant would be computable by a finite automaton contrary to \autoref{cor:notU}. This result is not new; it already appears in \cite[Example~3]{Frougny1997}. The conclusion is that addition in a Parry numeration system is not necessarily computable by a finite automaton. This shows in particular that Parry-recognizable sets do not have a characterization based on first-order logic like Pisot-recognizable sets have. This is a considerable defect of Parry numeration systems that are not Pisot.

  Let us then describe why a word obtained from a $k$-automatic sequence by periodically deleting letters is still
  $k$-automatic. Suppose that $\infw{x}$ is a $k$-automatic sequence over $A$, and let $\infw{y}$ be the word obtained
  from $\infw{x}$ by keeping only the letters at positions $0$, $t$, $2t$, $3t$, $\ldots$ for a fixed integer $t \geq
  2$. In other words, we have $\infw{y}[n] = \infw{x}[tn]$. As mentioned at the beginning of this section, for each
  $a \in A$, there exists a first-order formula $\varphi_a(n)$ in $\langle \N, +, V_k \rangle$ such that it holds if
  and only if $\infw{x}[n] = a$. By substituting $n$ by $tn$ in $\varphi_a(n)$, we obtain a new first-order formula in
  $\langle \N, +, V_k \rangle$ such that it holds if and only if $\infw{y}[n] = a$. It follows from
  \autoref{the:Uautomatic} that $\infw{y}$ is $k$-automatic. Again, similar construction works in the Pisot case.

  Let us next show that the class of $U$-automatic sequences is not closed under periodic deletion. Consider the
  characteristic sequence $\infw{y}$ of the set $\{U_n/2 \mid \text{$n \geq 0$ and $U_n$ is even}\}$:
  \begin{equation*}
    \infw{y} = 0010000000000000000000000001 \cdots.
  \end{equation*}
  This sequence $\infw{y}$ is obtained from the characteristic sequence $\infw{x}$ of the set $\{U_n \mid n \geq 0\}$ by removing
  its every second letter. Indeed, $\infw{y}[n]=\infw{x}[2n]$ hence $\infw{y}[n]=1$ if and only if $2n$ belongs to $\{U_j \mid j \geq 0\}$. We will show that $\infw{y}$ is not $U$-automatic, which proves the following theorem.

  \begin{theorem}
    There exists a Parry numeration system $U$ such that the class of $U$-automatic sequences is not closed under
    periodic deletion.
  \end{theorem}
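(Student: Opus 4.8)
The plan is to mirror the argument used for \autoref{cor:notU}. By \autoref{the:Uautomatic} it suffices to prove that the set $S=\{U_n/2\mid n\ge 0,\ U_n\text{ even}\}$ is not $U$-recognizable. First I would record that $S$ is infinite: reducing the recurrence \eqref{eq:Un} modulo $2$ gives $U_n\equiv U_{n-1}+U_{n-4}\pmod 2$, an invertible linear recurrence over $\mathbb{F}_2$, so the parity sequence $(U_n\bmod 2)_{n\ge 0}$ is purely periodic; since it is not constant (for instance $U_1$ is even while $U_2$ is odd), its period contains a zero and hence infinitely many $U_n$ are even. I then argue by contradiction: assuming $\rep_U(S)$ regular, a classical pumping argument yields words $x,y,z\in A_U^*$ with $y$ nonempty such that $xy^jz\in\rep_U(S)$ for every $j\ge 0$.

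Next I would establish the analogue of the prefix-convergence computation in \autoref{cor:notU}, but with limiting value $1/2$ in place of $t/\beta^r$. Since $U_n\sim c\beta^n$ and $\beta>2$, for large $n$ we have $U_{n-1}\le U_n/2<U_n$, so $\rep_U(U_n/2)$ is a word of length $n$; writing its first $k$ digits as $d_1\cdots d_k$, greediness gives $0\le U_n/2-\sum_{i=1}^k d_iU_{n-i}<U_{n-k}$, and dividing by $U_n$ and letting $n\to\infty$ (using $U_{n-j}/U_n\to\beta^{-j}$) yields $0\le \tfrac12-\sum_{i=1}^k d_i\beta^{-i}<\beta^{-k}$. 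Thus for each fixed $k$ the length-$k$ prefix of $\rep_U(U_n/2)$ stabilises, for $n$ large with $U_n$ even, to the length-$k$ prefix of $d_\beta(1/2)$. Applying this to the pumped words $xy^jz=\rep_U(U_{m_j}/2)$, whose lengths $m_j=\abs{xy^jz}$ tend to infinity, forces $xy^{\omega}=d_\beta(1/2)$; in particular $d_\beta(1/2)$ would be ultimately periodic.

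The main obstacle is therefore to prove that $d_\beta(1/2)$ is \emph{aperiodic}, which I would isolate as a separate lemma in the spirit of \autoref{pro:aper}. Since $1/2\in[0,1)\cap\Q(\beta)$ with $q=2$, $p_0=1$, and $p_1=p_2=p_3=0$ in the notation of \autoref{lem:schmidt}, ultimate periodicity of $d_\beta(1/2)$ would give $\tfrac12=\sum_{i\ge1}d_\beta(1/2)[i]\,\gamma^{-i}$. I would then derive a contradiction by computing the actual greedy digits of $d_\beta(1/2)$ and bounding the tail of this series via $\abs{d_i}\le 3$ and $\abs{\gamma}>1$, as in the $r\in\{2,3\}$ cases of \autoref{pro:aper}, to show that the right-hand side differs from $1/2$. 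The delicate point, and the reason this is harder than \autoref{pro:aper}, is that here there is no large integer $t\ge\ceil{\beta^{r-1}}$ to drive the estimate: the contradiction must come purely from the value of $\sum_i d_i\gamma^{-i}$, and because $\abs{\gamma}\approx 1.097$ is so close to $1$ the tail decays slowly, so separating $\sum_i d_i\gamma^{-i}$ from $1/2$ requires computing a good many digits (or a sharper, sign-exploiting estimate of the kind used in \eqref{eq:p2}). Once aperiodicity is in hand, it contradicts the ultimate periodicity forced above, so $S$ is not $U$-recognizable and $\infw{y}$ is not $U$-automatic.
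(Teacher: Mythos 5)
Your proposal follows the paper's proof essentially step for step: reduce to non-$U$-recognizability of $\{U_n/2 \mid U_n \text{ even}\}$, establish that the length-$k$ prefixes of $\rep_U(U_n/2)$ converge to those of $d_\beta(1/2)$, pump, and isolate the aperiodicity of $d_\beta(1/2)$ as a separate proposition proved via \autoref{lem:schmidt} with $q=2$, $p_0=1$, $p_1=p_2=p_3=0$. The one step you describe but do not execute --- computing enough digits of $d_\beta(1/2)$ to separate $\sum_{i}d_i\gamma^{-i}$ from $1/2$ despite the slow decay coming from $\abs{\gamma}\approx 1.097$ --- does go through exactly as you predict: the paper computes $d_1\cdots d_{21}=123102303001010220123$, obtains $S_{1,21}<-2.20$ and $S_{22,\infty}<2.33$, and concludes $1/2<0.13$, a contradiction.
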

  
  Let us begin with the following result.

  \begin{proposition}\label{pro:aper12}
    The $\beta$-expansion of $1/2$ is aperiodic.
  \end{proposition}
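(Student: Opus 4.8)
The plan is to mirror the proof of \autoref{pro:aper}: assume for contradiction that $d_\beta(1/2)$ is ultimately periodic and derive a contradiction through \autoref{lem:schmidt} together with an explicit estimate of finitely many digits. First I would observe that $1/2$ lies in $\Q \subseteq \Q(\beta)$ and is already written in the form required by \autoref{lem:schmidt} as $1/2 = q^{-1}\sum_{i=0}^3 p_i\beta^i$ with $q = 2$, $p_0 = 1$, and $p_1 = p_2 = p_3 = 0$. Under the periodicity assumption, \autoref{lem:schmidt} then gives
\begin{equation*}
  \frac12 = q^{-1}\sum_{i=0}^3 p_i\gamma^i = \sum_{i=1}^\infty d_i\gamma^{-i},
\end{equation*}
where $d_1 d_2 \cdots = d_\beta(1/2)$; the series converges because $\abs{\gamma} > 1$, and the leftmost equality holds because $1/2$ is rational and hence fixed by the conjugation $\beta\mapsto\gamma$ (just as $t/\beta^r$ has conjugate $t/\gamma^r$ in \autoref{pro:aper}).

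Next I would compute the first several digits of $d_\beta(1/2)$ with the greedy algorithm and split the identity as $\frac12 = S_N + R_N$, where $S_N := \sum_{i=1}^N d_i\gamma^{-i}$ and $R_N := \sum_{i>N} d_i\gamma^{-i}$. Since $\gamma < -1$, the powers $\gamma^{-i}$ alternate in sign while $0 \le d_i \le 3$; discarding the negative contributions and maximising the positive ones, exactly as in \eqref{eq:p2}, yields for odd $N$ the one-sided bound
\begin{equation*}
  R_N \le 3\sum_{\substack{i > N\\ i\ \mathrm{even}}}\gamma^{-i} = \frac{3\gamma^{-(N+1)}}{1-\gamma^{-2}}.
\end{equation*}
The goal then becomes purely computational: choose $N$ large enough that the numerically computed value of $\frac12 - S_N$ strictly exceeds this bound, which is incompatible with $\frac12 - S_N = R_N$.

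The main obstacle is that $1/2$ is a \emph{central} value, in contrast to the large integers $t$ appearing in \autoref{pro:aper}, so the crude tail estimate does not close after only a handful of digits. One must push the explicit greedy computation far enough (around twenty digits) that the geometrically shrinking right-hand side $3\gamma^{-(N+1)}/(1-\gamma^{-2})$ drops below the gap $\frac12 - S_N$. The partial sums $S_N$ do not tend to $\frac12$: they drift toward roughly $-3/2$, so $\frac12 - S_N$ stays near $2$ while the tail bound decays like $\abs{\gamma}^{-N}$, and the two cross once $N$ is about $20$. Thus the real care lies in verifying enough digits and confirming the final strict inequality; no idea beyond \autoref{lem:schmidt} and the alternating-tail estimate is required.
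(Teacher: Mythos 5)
Your proposal is correct and follows essentially the same route as the paper's proof: apply \autoref{lem:schmidt} to the rational point $1/2$, compute the first $21$ digits of $d_\beta(1/2)$ (the paper finds $d_1\cdots d_{21}=123102303001010220123$, with $S_{1,21}<-2.20$), and bound the tail by $3\gamma^{-22}/(1-\gamma^{-2})<2.33$ exactly as in your alternating-tail estimate, yielding the same contradiction at $N=21$. The only remaining work is the numerical verification you already flag, including the need for sufficient precision in $\beta$.
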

  \begin{proof}
    Assume for a contradiction that $d_\beta(1/2) = d_1 d_2 \cdots$ is ultimately periodic. Like in the proof of
    \autoref{pro:aper}, we obtain that
    \begin{equation*}
      \frac12 = \sum_{i = 1}^\infty d_i \gamma^{-i} = S_{1,k} + S_{k+1,\infty},
    \end{equation*}
    where $S_{r,s} = \sum_{i = r}^s d_i \gamma^{-i}$. It can be computed that
    $d_1 \cdots d_{21} = 123102303001010220123$. This computation actually needs some extra accuracy. It is sufficient
    to know that $3.61645454325$ are correct initial digits for $\beta$. Using this information on $d_1 \cdots d_{21}$,
    it is computed that
    \begin{equation*}
      S_{1,21} < -2.20.
    \end{equation*}
    Since $\gamma$ is negative and $d_i \leq 3$ for all $i \geq 1$, we obtain that
    \begin{equation*}
      S_{22,\infty} \leq \frac{3\gamma^{-22}}{1-\gamma^{-2}} < 2.33.
    \end{equation*}
    The two preceding inequalities show that $1/2 < -2.20 + 2.33 = 0.13$, which is obviously absurd.
  \end{proof}

  Interestingly the $\beta$-expansion of $1/3$ is ultimately periodic. Indeed, it can be shown that
  $d_\beta(1/3) = 10(2212)^\omega$.

  \begin{corollary}
    The set $\{U_n/2 \mid \text{$n \geq 0$ and $U_n$ is even}\}$ is not $U$-recognizable. In other words, its
    characteristic sequence $\infw{y}$ is not $U$-automatic.
  \end{corollary}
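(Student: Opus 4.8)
The plan is to mirror the structure of the proof of \autoref{cor:notU}, replacing the role of \autoref{pro:aper} with \autoref{pro:aper12}. The key observation is that the set $\{U_n/2 \mid \text{$n \geq 0$ and $U_n$ is even}\}$ consists exactly of the numbers $U_n/2$ over those $n$ for which $U_n$ is even. First I would check that the recurrence \eqref{eq:Un} forces the parity of $U_n$ to be eventually periodic, so that there are infinitely many even $U_n$ and infinitely many odd $U_n$; in particular the set under consideration is infinite. This is a routine parity computation on the recurrence $U_n = 3U_{n-1} + 2U_{n-2} + 3U_{n-4}$ with the given initial values, and I would not belabor it.

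Next I would establish, exactly as in \autoref{cor:notU}, that for those $n$ with $U_n$ even, the greedy representation $\rep_U(U_n/2)$ has length close to $n$ and that its prefixes converge to $d_\beta(1/2)$. Since $U_n/2 \sim (c/2)\beta^n$ and $U_{n-1} < U_n/2 < U_n$ for $n$ large, the representation $\rep_U(U_n/2)$ is a word of length $n$ starting with a nonzero digit. Writing $\rep_U(U_n/2) = d_1 \cdots d_n$ and using the same extremal bound $0 \leq U_n/2 - \sum_{j=1}^k d_j U_{n-j} < U_{n-k}$ coming from greediness, I would divide by $U_n$, let $n \to \infty$, and obtain that the first $k$ digits of $d_\beta(1/2)$ are precisely $d_1 \cdots d_k$. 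Thus the prefixes of the representations of the even numbers $U_n/2$ stabilize to the $\beta$-expansion of $1/2$.

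Finally I would argue by contradiction: if $\rep_U(\{U_n/2 \mid U_n \text{ even}\})$ were accepted by a finite automaton, a pumping argument would produce words $xy^j z$ with $y$ nonempty that are all accepted, forcing $d_\beta(1/2)$ to be of the ultimately periodic form $xy^\omega$, which contradicts \autoref{pro:aper12}. Combined with \autoref{the:Uautomatic}, this shows $\infw{y}$ is not $U$-automatic.

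The main obstacle is the slight mismatch between the index sets: in \autoref{cor:notU} every $n$ contributes, whereas here only the $n$ with $U_n$ even contribute, so the corresponding representations are a \emph{subsequence}. I expect the only care needed is to verify that infinitely many such $n$ exist (the parity check) and that the pumping argument still applies to the regular language $\rep_U(\{U_n/2\})$, whose members have prefixes converging to $d_\beta(1/2)$; since there are infinitely many distinct representations, the pumped words must have unbounded length and hence their common limit $d_\beta(1/2)$ is genuinely periodic, giving the contradiction.
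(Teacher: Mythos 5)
Your proposal is correct and follows essentially the same route as the paper: show that the representations $\rep_U(U_n/2)$ have length $n$ and prefixes converging to $d_\beta(1/2)$, then pump to force $d_\beta(1/2)$ to be ultimately periodic, contradicting \autoref{pro:aper12}. The only (harmless) difference is that you explicitly verify there are infinitely many even $U_n$ via the eventual periodicity of the parities, a point the paper leaves implicit while working with $\floor{U_n/2}$ for all $n$.
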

  \begin{proof}
    We follow steps similar to those of the proof of \autoref{cor:notU}. From \eqref{eq:Un}, it is clear that
    $U_{n-1} < \floor{U_n/2} < U_n$ for $n > 1$, so that $\rep_U(\floor{U_n/2})$ is a word of length $n$. Let $k > 0$.
    We show that, for large enough $n$, $\rep_U(\floor{U_n/2})$ and $d_\beta(1/2)$ have the same prefix of length $k$.
    Assume that
    \begin{equation*}
      \rep_U(\floor{U_n/2}) = d_1 \cdots d_k d_{k+1} \cdots d_n.
    \end{equation*}
    Again, the extremal values for $d_{k+1} \cdots d_n$ are $0^{n-k}$ and (possibly) $\rep_U(U_{n-k}-1)$ due to the greediness of the representations. Therefore
    \begin{equation*}
      0 \leq \floor{U_n/2} - d_1 U_{n-1} - \cdots - d_k U_{n-k} < U_{n-k}.
    \end{equation*}
    Clearly $\floor{U_n/2}/U_n \xrightarrow{n \to \infty} 1/2$ so, dividing by $U_n$ and by letting $n$ tend to
    infinity, we obtain
    \begin{equation*}
      0 \leq \frac12 - \frac{d_1}{\beta} - \cdots - \frac{d_k}{\beta^k}  < \frac{1}{\beta^k}.
    \end{equation*}
    Thus the first $k$ digits of $d_\beta(\floor{U_n/2})$ are $d_1 \cdots d_k$. This means that the words of the
    language $\rep_U(\{U_n/2 \mid \text{$n \geq 0$ and $U_n$ is even}\})$ share longer and longer prefixes with
    $d_\beta(1/2)$.

    The results follows by an argument similar to the final paragraph of the proof of \autoref{cor:notU}: if
    $\rep_U(\{U_n/2 \mid \text{$n \geq 0$ and $U_n$ is even}\})$ is accepted by a finite deterministic automaton, then
    $d_\beta(1/2)$ is ultimately periodic, and this is impossible by \autoref{pro:aper12}.
  \end{proof}

  Notice that the proof in fact shows that the set $\{\floor{U_n/2} \mid n \geq 0\}$ is not $U$-recognizable. Even
  though $\infw{y}$ is not $U$-automatic, we suspect that the word obtained from $\infw{x}$, the characteristic
  sequence of $\{U_n \mid n \geq 0\}$, by keeping only the letters at indices that are divisible by $3$ is
  $U$-automatic. This would follow from our conjecture that $\rep_U(\{U_n/3 \mid \text{$n \geq 0$ and
  $U_n \equiv 0 \pmod{3}$}\})$ equals $11 + 10(2212)^*(3 + 23 + 222 + 2213)$, but we have not attempted to rigorously
  prove this. Notice that $U_n$ is divisible by $3$ when $n \geq 2$.

\section{Multidimensional sequences}\label{sec:multi}
By \autoref{prp:kernel_1D}, an infinite word is $U$-automatic with respect to a numeration system $U$ with $\rep_U(\N)$
regular if and only its $U$-kernel is finite. Moreover, this is true more generally for abstract numeration systems.
The generalization of this result to multidimensional sequences $\infw{x} = (x_{m,n})_{m,n \geq 0}$
\cite[Prop.~32]{RigoMaes} is however slightly problematic as an extra assumption on the projections
$(x_{k,n})_{n \geq 0}$ and $(x_{m,k})_{m \geq 0}$ is required. This extra assumption is however unnecessary for
positional numeration systems considered in this paper. This fact is not found in the literature according to our
knowledge, and this section is devoted to filling this gap.

For the sake of simplicity of presentation, we limit our presentation to two-dimensional sequences. We will consider
finite automata reading pairs of digits. In particular, a pair of words can be read only if the two components have the
same length. With positional numeration systems, when considering two representations of different length, then the
shorter is padded with leading zeroes. For general abstract numeration systems an additional padding letter needs to be
added, and this causes some complications.

\begin{definition}
  Let $U$ be a numeration system. A bi-infinite word $\infw{x}=(x_{m,n})_{m,n\ge 0}$ over an alphabet $B$ is \emph{$U$-automatic} if there exists a complete DFAO $(Q,q_0,A_U\times A_U,\delta,\tau)$ with transition function $\delta\colon Q\times (A_U\times A_U)^*\to Q$ and output function $\tau\colon Q\to B$ such that $\delta(q_0,(0,0))=q_0$ and
  \begin{equation*}
    x_{m,n}=\tau(\delta(q_0,(0^{\ell-|\rep_U(m)|}\rep_U(m),0^{\ell-|\rep_U(n)|}\rep_U(n)))),\quad \forall m,n\ge 0,
  \end{equation*}
where $\ell=\max\{|\rep_U(m)|,|\rep_U(n)|\}$.
The bi-infinite word $\infw{x}$ is \emph{$k$-automatic} (resp. \emph{Parry-automatic}, \emph{Bertrand-automatic}) if
$U=(k^n)_{n\ge 0}$ for an integer $k \geq 2$ (resp. $U$ is a Parry numeration system, $U$ is a Bertrand numeration system).
\end{definition}

\autoref{def:Ukernel} is extended as follows (we make use of the notation $i(s,n)$ introduced therein).

\begin{definition}
    The \emph{$U$-kernel} of a bi-infinite word $\infw{x}=(x_{m,n})_{m,n\ge 0}$ over $B$ is the set 
\begin{equation*}
  \ker_U(\infw{x}):=\{(x_{i(s,m),i(t,n)})_{m,n\ge 0}\in B^{\N^2} \mid s,t\in A_U^*,\ |s|=|t|\}.
\end{equation*}
\end{definition}

Let us then state and prove the result mentioned above.

\begin{proposition}
    Let $U$ be a numeration system such that the numeration language $\rep_U(\N)$ is regular. A bi-infinite word $\infw{x}=(x_{m,n})_{m,n\ge 0}$ is $U$-automatic if and only if its $U$-kernel is finite. 
\end{proposition}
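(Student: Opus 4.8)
The plan is to mirror the one-dimensional statement of \autoref{prp:kernel_1D} but carry every construction into the two-dimensional setting, exploiting that with a positional numeration system the padding with leading zeroes requires no extra padding symbol (unlike the abstract numeration case), which is precisely why the troublesome projection hypothesis of \cite[Prop.~32]{RigoMaes} can be dropped.

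First I would prove the easy direction, that $U$-automaticity implies a finite $U$-kernel. Given a DFAO $(Q,q_0,A_U\times A_U,\delta,\tau)$ generating $\infw{x}$, I would show that each kernel element $(x_{i(s,m),i(t,n)})_{m,n\ge 0}$ with $|s|=|t|$ is itself $U$-automatic, generated by the same DFAO but started from the state $q_s:=\delta(q_0,(s,t))$ read in reverse-padded fashion. Concretely, by \autoref{pro:reversal} (whose multidimensional analogue is noted to follow from the same automaton constructions), reading representations from the least significant digit lets one append the common suffix $(s,t)$ of length $|s|$; the indices $i(s,m)$ and $i(t,n)$ are by definition exactly those whose padded representations end in $s$ and $t$ respectively. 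Since $\delta(q_0,(s,t))$ takes only finitely many values as $(s,t)$ ranges over pairs of equal-length words, there are only finitely many such reversal-DFAO start states, hence finitely many distinct kernel elements.

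For the converse, I would assume $\ker_U(\infw{x})$ is finite and build a DFAO for $\infw{x}$ read in reverse. The states will be the distinct kernel elements (equivalently, the equivalence classes of pairs $(s,t)$ with $|s|=|t|$ under "induce the same subsequence"), the initial state being the class of $(\varepsilon,\varepsilon)$, which gives back $\infw{x}$ itself. For a letter $(a,b)\in A_U\times A_U$ I would define the transition from the class of $(s,t)$ to the class of $(as,bt)$, checking this is well-defined precisely because prepending $(a,b)$ to representations that end in $s$ (resp.\ $t$) selects the sub-subsequence indexed by $i(as,\cdot)$ (resp.\ $i(bt,\cdot)$), an operation that depends only on the kernel element, not on the representative $(s,t)$. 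The output of a state would read off the entry indexed by the minimal admissible pair, i.e.\ $x_{i(s,0),i(t,0)}$, and one confirms $x_{m,n}$ is recovered by feeding the reversal of the padded pair $(\rep_U(m),\rep_U(n))$. Applying \autoref{pro:reversal} once more converts this reverse-reading DFAO into an ordinary one, yielding $U$-automaticity.

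The main obstacle, and the only place where regularity of $\rep_U(\N)$ is genuinely used, is ensuring the transition map is both well-defined and total on a \emph{finite} state set. The subtlety is that prepending $(a,b)$ may take a valid equal-length pair of representation-suffixes to a pair that can never be completed to legitimate greedy representations; one must verify that such "dead" pairs collapse into the empty/finite kernel elements already present, so that no new states are spawned. Here the regularity of the numeration language is what guarantees that the set of admissible padded representations is recognized by a finite automaton, bounding the number of reachable classes. This is exactly the point where the extra projection hypothesis was needed for general abstract numeration systems: the padding letter there creates pairs whose two components, once padded, need not synchronize, whereas leading-zero padding for positional systems keeps the two components length-locked and the admissibility check purely local. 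I would make this precise by relating the kernel classes to the states of the product of the (reversed) minimal automaton of $\rep_U(\N)$ with itself, which is finite, and conclude that the DFAO has finitely many states.
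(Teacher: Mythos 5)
Your overall strategy---build a reverse-reading DFAO from the kernel and conclude via \autoref{pro:reversal}, with regularity of $\rep_U(\N)$ supplying finiteness---is the same as the paper's, and your treatment of the easy direction is fine (the paper simply cites the abstract-numeration-system result). But there is a genuine gap in the converse. You take the states to be the kernel elements themselves, i.e.\ the classes of pairs $(s,t)$ under ``induce the same subsequence'', and you assert that the transition $(s,t)\mapsto(as,bt)$ is well defined because passing from $(x_{i(s,m),i(t,n)})_{m,n\ge 0}$ to $(x_{i(as,m),i(bt,n)})_{m,n\ge 0}$ ``depends only on the kernel element, not on the representative''. That assertion is false as stated. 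The sub-subsequence $(x_{i(as,m),i(bt,n)})_{m,n\ge 0}$ is obtained from $(x_{i(s,k),i(t,\ell)})_{k,\ell\ge 0}$ by keeping exactly those $k$ (resp.\ $\ell$) for which the $k$th word, in genealogical order, of $\bigl(0^*\rep_U(\N)\cdot s^{-1}\bigr)\setminus 0A_U^*$ ends with $a$ (resp.\ of the analogous set for $t$, ends with $b$). Which positions are kept is governed by the quotient language, not by the values of the subsequence: two suffixes $s,s'$ can induce identical subsequences while having different quotients $0^*\rep_U(\N)\cdot s^{-1}\neq 0^*\rep_U(\N)\cdot s'^{-1}$, in which case prepending $a$ selects different index sets and can yield different sub-subsequences. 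Your state set is therefore too coarse for the transition function to be well defined on it.

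The paper's construction repairs exactly this point by enlarging the state to a triple $(\mathcal{L}(s),\mathcal{L}(t),(x_{i(s,m),i(t,n)})_{m,n\ge 0})$ with $\mathcal{L}(s)=0^*\rep_U(\N)\cdot s^{-1}$; the equalities $\mathcal{L}(s)=\mathcal{L}(s')$ and $\mathcal{L}(t)=\mathcal{L}(t')$ are what allow one to match up the positions of the words ending in $a$ (resp.\ $b$) in the two genealogical enumerations, so that the same pairs $(k,\ell)$ are selected on both sides and the hypothesis $x_{i(s,k),i(t,\ell)}=x_{i(s',k),i(t',\ell)}$ can then be invoked. Finiteness of the state set follows from Myhill--Nerode (finitely many right quotients, since $\rep_U(\N)$ is regular) together with finiteness of the kernel. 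Your closing remark about forming a product with the reversed minimal automaton of $\rep_U(\N)$ points in the right direction, but you deploy it only to bound the number of states, not to restore well-definedness of the transition, which is where the quotient components are actually indispensable. (Your worry about ``dead'' pairs is comparatively harmless: \autoref{def:Ukernel} already admits finite or empty subsequences as kernel elements, so no special collapsing argument is needed there.)
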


\begin{proof}
  Let $\infw{x} = (x_{m,n})_{m,n \geq 0}$ be a bi-infinite word. From \cite[Prop.~32]{RigoMaes}, we already know that
  if $\infw{x}$ is $U$-automatic, then its $U$-kernel is finite because the result holds for all abstract numeration
  systems. We only need to prove the converse. 

  Denote the $U$-kernel of $\infw{x}$ by $K$, and suppose that it is finite. For $s\in A_U^*$, define
  \begin{equation*}
    \mathcal{L}(s):=0^*\rep_U(\N)\cdot s^{-1}=\{w\in A_U^*\mid ws \in 0^*\rep_U(\N)\}.
  \end{equation*}
  By assumption $0^*\rep_U(\N)$ is regular, so it follows from the Myhill--Nerode theorem that the set of right
  quotients 
  \begin{equation*}
    J:=\{\mathcal{L}(s)\mid s\in A_U^*\}
  \end{equation*}
  is finite. Let us define a DFAO $\mathcal{M}$ with state set
  \begin{equation*}
    Q:=J\times J\times K,
  \end{equation*}
  transition function $\delta$, output function $\tau$, and initial state
  \begin{equation*}
q_0:=
\left(
0^*\rep_U(\N),0^*\rep_U(\N),(x_{m,n})_{m,n\ge 0}
\right)
=\left(
\mathcal{L}(\varepsilon),\mathcal{L}(\varepsilon),(x_{i(\varepsilon,m),i(\varepsilon,n)})_{m,n\ge 0}
\right).
  \end{equation*}
  For a state $q=(\mathcal{L}(s),\mathcal{L}(t),(x_{i(s,m),i(t,n)})_{m,n\ge 0})$ in $Q$, with $|s|=|t|$, and each pair $(a,b)$ of digits
  in $A_U \times A_U$, we set
  \begin{equation*}
    \delta(q,(a,b))=(\mathcal{L}(as),\mathcal{L}(bt),(x_{i(as,m),i(bt,n)})_{m,n\ge 0}).
  \end{equation*}
  For other types of states, i.e., $(\mathcal{L}(s),\mathcal{L}(t),(x_{i(s',m),i(t',n)})_{m,n\ge 0})$ with $s\neq s'$ or $t\neq t'$, we leave the transition function undefined as it is clear that such states are not
  reachable from the initial state $q_0$.

  We have to check that the transition function $\delta$ is well-defined. Assume that 
  \begin{equation*}
    (\mathcal{L}(s),\mathcal{L}(t),(x_{i(s,m),i(t,n)})_{m,n\ge 0})
    =(\mathcal{L}(s'),\mathcal{L}(t'),(x_{i(s',m),i(t',n)})_{m,n\ge 0})
  \end{equation*}
  with $|s|=|t|$ and $|s'|=|t'|$. For all $(a,b)\in A_U\times A_U$, we need to show that 
  \begin{equation*}
    (\mathcal{L}(as),\mathcal{L}(bt),(x_{i(as,m),i(bt,n)})_{m,n\ge 0})
    =(\mathcal{L}(as'),\mathcal{L}(bt'),(x_{i(as',m),i(bt',n)})_{m,n\ge 0}).
  \end{equation*}
  For the first two components, the result follows from the definition: $\mathcal{L}(as)=\mathcal{L}(s)\cdot a^{-1}$
  for any letter $a$. For the third component, we want to prove that $x_{i(as,m),i(bt,n)}=x_{i(as',m),i(bt',n)}$ for
  all $m,n\ge 0$. We know that $\mathcal{L}(s)=\mathcal{L}(s')$, $\mathcal{L}(t)=\mathcal{L}(t')$, and
  $x_{i(s,m),i(t,n)}=x_{i(s',m),i(t',n)}$ for all $m,n\ge 0$. Enumerate the words of $\mathcal{L}(s)\setminus 0A_U^*$
  in genealogical order $\prec$:
  \begin{equation*}
    \mathcal{L}(s)\setminus 0A_U^*=\{ r_{s,0} \prec r_{s,1} \prec r_{s,2} \prec \cdots\}.
  \end{equation*}
  Similarly, we write
  \begin{equation*}
    \mathcal{L}(t)\setminus 0A_U^*=\{ r_{t,0} \prec r_{t,1} \prec r_{t,2} \prec \cdots\}.
  \end{equation*}
  Note that if $s$ is a suffix occurring in a valid $U$-representation, then $r_{s,0}=\varepsilon$; similarly for
  $r_{t,0}$. Let $j,k\ge 0$. Since $r_{s,j}$ and $r_{s,k}$ do not start with a zero digit, we have
  \begin{equation*}
    r_{s,j}\prec r_{s,k} \, \Leftrightarrow \, \val_U(r_{s,j} 0^{|s|})<\val_U(r_{s,k} 0^{|s|}),
  \end{equation*}
  and an analogous equivalence holds for $r_{t,j}$ and $r_{t,k}$. The subsequence $(x_{i(s,m),i(t,n)})_{m,n\ge 0}$ is
  the same as the sequence
  \begin{equation*}
    (x_{\val_U(r_{s,m}s),\val_U(r_{t,n}t)})_{m,n\ge 0}
  \end{equation*}
  because by definition~\eqref{eq:Is}, $i(s,m)$ (resp. $i(t,n)$) is the $m$th (resp. $n$th) integer belonging to
  $\mathcal{I}_s=\val_U(0^*\rep_U(\N)\cap A_U^*s)$ (resp. $\mathcal{I}_t$). Notice that words in $\mathcal{L}(s)$
  (resp. $\mathcal{L}(t)$) starting with $0$ do not provide any new indices. So when building the subsequence, we can
  limit ourselves to words not starting with $0$. If we select in $\mathcal{L}(s)\setminus 0A_U^*$ all words ending
  with $a$, we get exactly $(\mathcal{L}(as)\setminus 0A_U^*)a$, which is equal to
  $(\mathcal{L}(as')\setminus 0A_U^*)a$ because $\mathcal{L}(as)=\mathcal{L}(as')$. Let $m\ge 0$ and $r_{as,m}$ be the
  $m$th word in $\mathcal{L}(as)\setminus 0A_U^*$. Suppose that the $m$th word in $(\mathcal{L}(as)\setminus 0A_U^*)a$,
  which is $r_{as,m}a$, occurs as the $k$th word $r_{s,k}$ in $\mathcal{L}(s)\setminus 0A_U^*$. Then $r_{s,k}$ also
  occurs as the $k$th word $r_{s',k}$ in $\mathcal{L}(s')\setminus 0A_U^*$. With our notation, we have
  \begin{equation*}
    r_{as,m}a=r_{s,k},\ 
    \val_U(r_{as,m}as)=\val_U(r_{s,k}s), \text{ and }
    i(as,m)=i(s,k)=i(s',k).
  \end{equation*}
  We can make similar observations for the other component. Supposing that $r_{bt,n} = r_{t,\ell}$ for some $\ell$, we
  thus have
  \begin{equation*}
    x_{i(as,m),i(bt,n)}=x_{i(s,k),i(t,\ell)} =
    x_{i(s',k),i(t',\ell)}=x_{i(as',m),i(bt',n)},
  \end{equation*}
  where the central equality comes from our initial assumption. Therefore we have shown that $\delta$ is well-defined.

  From our definition of the transition function $\delta$, the accessible part of $\mathcal{M}$ is limited to states
  $q$ of the form
  \begin{equation*}
    (\mathcal{L}(s),\mathcal{L}(t),(x_{i(s,m),i(t,n)})_{m,n\ge 0})
  \end{equation*}
  with $|s|=|t|$. For such a state $q$, we set
  \begin{equation*}
    \tau(q)=x_{i(s,0),i(t,0)}.
  \end{equation*}
  Notice that the preceding arguments show that $\tau$ is also well-defined. To conclude with the proof, let us show
  that if $s,t$ are two words of the same length in $0^*\rep_U(\N)$, then 
  \begin{equation*}
    \tau(\delta(q_0,(s^R,t^R)))=x_{\val_U(s),\val_U(t)},
  \end{equation*}
  where $s^R$ and $t^R$ respectively denote the reversals of the words $s$ and $t$. Reading $(s^R,t^R)$ from $q_0$
  leads to the state $(\mathcal{L}(s),\mathcal{L}(t),(x_{i(s,m),i(t,n)})_{m,n\ge 0})$. Since $s,t\in
   0^*\rep_U(\N)$, we have that $\varepsilon$ belongs to $\mathcal{L}(s)$ and $\mathcal{L}(t)$.
 It is clear that
  ${i(s,0)}={\val_U(s)}$ and ${i(t,0)}={\val_U(t)}$. 

  We have thus proved that $\infw{x}$ is reversal-$U$-automatic. It follows from \autoref{pro:reversal} (which also
  holds in the multidimensional setting) that $\infw{x}$ is $U$-automatic.
\end{proof}

  \section*{Acknowledgments}
  This work was initiated when the second author visited his coauthors at the Department of Mathematics of the
  University of Liège during June 2018. He thanks the department for their hospitality.


\begin{thebibliography}{10}
    \bibitem{All}
    J.-P.~Allouche, J.~Shallit.
    \newblock \emph{Automatic sequences. Theory, applications, generalizations}.
    \newblock Cambridge University Press (2003).

    \bibitem{Be}
    A.~Bertrand-Mathis.
    \newblock Comment {\'e}crire les nombres entiers dans une base qui n'est pas enti{\`e}re.
    \newblock \emph{Acta Math. Hungar.} \textbf{54} (1989), 237--241. DOI:\href{https://doi.org/10.1007/BF01952053}{10.1007/BF01952053}.

    \bibitem{cant}
    V.~Berth{\'e}, M.~Rigo, eds.
    \newblock \emph{Combinatorics, Automata and Number Theory}.
    \newblock Encyclopedia of Mathematics and Its Applications, vol. \textbf{135}.
    \newblock Cambridge University Press (2010).

    \bibitem{BH}
    V.~Bruy\`ere, G.~Hansel.
    \newblock Bertrand numeration systems and recognizability.
    \newblock \emph{Theoret. Comput. Sci.} \textbf{181} (1997), 17-–43. DOI:\href{https://doi.org/10.1016/S0304-3975(96)00260-5}{10.1016/S0304-3975(96)00260-5}.

    \bibitem{Buchi1960}
    J.~R.~B\"uchi.
    \newblock Weak second-order arithmetic and finite automata.
    \newblock \emph{Z. Math. Logik Grundlag. Math.} \textbf{6} (1960), 66--92.

    \bibitem{Cobham1972}
    A.~Cobham.
    \newblock Uniform tag sequences.
    \newblock \emph{Math. Systems Theory} \textbf{6} (1972), 164--192. \\ DOI:\href{https://doi.org/10.1007/BF01706087}{10.1007/BF01706087}.

    \bibitem{Eilenberg}
    S.~Eilenberg.
    \newblock \emph{Automata, Languages and Machines}, vol.~A.
    \newblock Academic Press (1974).

    \bibitem{Fr}
    Ch.~Frougny.
    \newblock Representations of numbers and finite automata.
    \newblock \emph{Math. Systems Theory} \textbf{25} (1992), 37--60. DOI:\href{https://doi.org/10.1007/BF01368783}{10.1007/BF01368783}.

    \bibitem{Frougny1997}
    Ch.~Frougny.
    \newblock On the sequentiality of the successor function.
    \newblock \emph{Inform. and Comput.}, \textbf{139.1} (1997), 17--38. DOI:\href{https://doi.org/10.1006/inco.1997.2650}{10.1006/inco.1997.2650}.

    \bibitem{FS}
    Ch.~Frougny, B.~Solomyak.
    \newblock On representation of integers in linear numeration systems.
    \newblock In \emph{Ergodic theory of $Z_d$ actions (Warwick, 1993--1994)}, 345--368, London Math. Soc. Lecture Note Ser. 228, Cambridge University Press (1996).

    \bibitem{Hollander1998}
    M.~Hollander.
    \newblock Greedy numeration systems and regularity.
    \newblock \emph{Theory Comput. Syst.} \textbf{31.2} (1998), 111--133. DOI:\href{https://doi.org/10.1007/s002240000082}{10.1007/s002240000082}.

    \bibitem{LindMarcus1995}
    D.~Lind, B.~Marcus.
    \newblock \emph{An Introduction to Symbolic Dynamics and Coding}.
    \newblock Cambridge University Press (1995).

    \bibitem{Lot}
    M.~Lothaire.
    \newblock \emph{Algebraic Combinatorics on Words}.
    \newblock Encyclopedia of Math. and Its Applications, vol.~90, Cambridge University Press (2002).

    \bibitem{parry}
    W.~Parry.
    \newblock On the $\beta$-expansions of real numbers.
    \newblock \emph{Acta Math. Acad. Sci. Hungar.} \textbf{11} (1960), 401--416.

    \bibitem{Pansiot1984}
    J.-J.~Pansiot.
    \newblock Complexit{\'e} des facteurs des mots infinis engendr{\'e}s par morphismes it{\'e}r{\'e}s.
    \newblock In J. Paredaens, ed., \emph{Proc. 11th Int'l Conf. on Automata, Languages, and Programming (ICALP)}.
    \newblock Lecture Notes in Computer Science, Vol.~172, Springer-Verlag, 1984, pp.~380--389.

    \bibitem{Rigo2000}
    M.~Rigo.
    \newblock Generalization of automatic sequences for numeration systems on a regular language.
    \newblock \emph{Theoret. Comput. Sci.} \textbf{244} (2000), 271--281. DOI:\href{https://doi.org/10.1016/S0304-3975(00)00163-8}{10.1016/S0304-3975(00)00163-8}.

    \bibitem{RigoMaes}
    M.~Rigo, A.~Maes.
    \newblock More on generalized automatic sequences.
    \newblock \emph{J. of Automata, Languages and Combinatorics} \textbf{7} (2002), 351--376.

    \bibitem{Rigo}
    M.~Rigo.
    \newblock \emph{Formal Languages, Automata and Numeration Systems}, vol. 2. Applications to recognizability and decidability.
    \newblock ISTE, London; John Wiley \& Sons, Inc., Hoboken, NJ, (2014).

    \bibitem{Schmidt1980}
    K.~Schmidt.
    \newblock On periodic expansions of Pisot numbers and Salem numbers.
    \newblock \emph{Bull. London Math. Soc.}, 12:269--278, 1980. DOI:\href{https://doi.org/10.1112/blms/12.4.269}{10.1112/blms/12.4.269}.

    \bibitem{Shallit1994}
    J.~Shallit.
    \newblock Numeration systems, linear recurrences, and regular sets.
    \newblock \emph{Inform. and Comput.} \textbf{113.2} (1994), 331--347. DOI:\href{https://doi.org/10.1006/inco.1994.1076}{10.1006/inco.1994.1076}.
  \end{thebibliography}
\end{document}